\documentclass[sigconf]{acmart}

\usepackage{graphicx} 
\usepackage{amsthm}
\usepackage{amsmath}   
\usepackage{amsbsy}    
\usepackage{subfigure}
\usepackage{subcaption}
\usepackage{graphicx}
\usepackage{tabularx} 
\usepackage{booktabs}
\usepackage{multirow}
\usepackage{multicol}
\usepackage{breqn}
\usepackage{extarrows}
\usepackage{enumitem}
\usepackage{bm}
\usepackage{xcolor}
\usepackage{pifont}
\usepackage{changepage}

\theoremstyle{plain}
\newtheorem{theorem}{Theorem}[section]

\newtheorem{lemma}[theorem]{Lemma}

\theoremstyle{definition}

\theoremstyle{remark}

\newtheorem{cond}{Condition}

\newcommand{\lc}{\left(} 
\newcommand{\rc}{\right)} 
\newcommand{\ox}{\overline{X}} 
\newcommand{\oy}{\overline{Y}} 
\newcommand{\ow}{\overline{W}} 
\newcommand{\oz}{\overline{Z}} 
\newcommand{\ty}{\widetilde{Y}}

\newcommand{\wt}{\widehat{\theta}} 
\newcommand{\wl}{\widehat{\lambda}} 
\newcommand{\wa}{\widehat{\tau}} 
\newcommand{\wv}{\widehat{\text{var}}} 
\newcommand{\wc}{\widehat{\text{cov}}} 
\newcommand{\tv}{\text{var}}
\newcommand{\tc}{\text{cov}}
\newcommand{\wdd}{\widehat{\delta}}

\newcommand{\lm}{\lim_{n\rightarrow \infty}} 
\newcommand{\lnm}{\lim_{m\rightarrow \infty}} 
\newcommand{\sm}{\sum_{i=1}^n}
\newcommand{\sn}{\sum_{i=1}^N}
\newcommand{\smm}{\sum_{i=1}^m}
\newcommand{\la}{\langle \mu_0 \rangle}
\newcommand{\lb}{\langle \mu_1 \rangle}
\AtBeginDocument{%
  }

\setcopyright{none} % 版权设置
\renewcommand\footnotetextcopyrightpermission[1]{} % removes footnote with conference information in first column

\acmConference[arXiv]{arXiv}{2026}{preprint}
\acmISBN{}

\begin{document}

%%
%% The "title" command has an optional parameter,
%% allowing the author to define a "short title" to be used in page headers.
\title{Ensuring Trustworthy Online A/B Testing: Addressing Five Key Questions on CUPED}

%%
%% The "author" command and its associated commands are used to define
%% the authors and their affiliations.
%% Of note is the shared affiliation of the first two authors, and the
%% "authornote" and "authornotemark" commands
%% used to denote shared contribution to the research.
 
\author{Yu Zhang}
\authornote{Corresponding authors.} % 
\affiliation{%
  \institution{Zhongtai Securities Institute for Financial Studies, Shandong University}
  \city{Jinan}
  \country{China}}
\email{yuzhang.johnny@mail.sdu.edu.cn}

\author{Bokui Wan}
\authornotemark[1] % 
\affiliation{%
  \institution{ByteDance}
  \city{Beijing}
  \country{China}}
\email{wanbokui@bytedance.com}

\author{Yongli Qin}
\authornotemark[1] % 
\affiliation{%
  \institution{ByteDance}
  \city{Beijing}
  \country{China}}
\email{yongli.qin@bytedance.com}

\author{Jinyong Ma}
\affiliation{%
\institution{ByteDance}
\city{Beijing}
\country{China}}
\email{jinyongma@bytedance.com}

\author{Yifan Guo}
\affiliation{%
\institution{ByteDance}
\city{Beijing}
\country{China}}
\email{guoyifan.tt@bytedance.com}

%%
%% By default, the full list of authors will be used in the page
%% headers. Often, this list is too long, and will overlap
%% other information printed in the page headers. This command allows
%% the author to define a more concise list
%% of authors' names for this purpose.
\renewcommand{\shortauthors}{Zhang et al.}

%%
%% The abstract is a short summary of the work to be presented in the
%% article.
\begin{abstract}

A/B testing has become the gold standard for data-driven decision-making in large-scale online experimentation, providing critical guidance for feature launch, pricing optimization, and user experience enhancement. To maximize statistical sensitivity, many technology companies routinely employ Controlled-experiment Using Pre-Experiment Data (CUPED), a technique that achieves substantial variance reduction while preserving the unbiasedness of estimating the average treatment effect. Despite its widespread adoption, several critical methodological and practical nuances of CUPED remain underexplored. 
This paper systematically addresses five frequently encountered yet overlooked questions regarding the application of CUPED. First, we provide a comparative analysis of various post-CUPED estimators to identify the optimal adjustment specification. Second, we evaluate the validity of regression-based adjustments and delineate robust variance estimation methods tailored for such frameworks. Finally, we extend our investigation to complex but common scenarios, including multi-arm experiments and two-stage sampling designs. Our findings reveal that in these settings, naive reliance on standard variance estimators can lead to severely misleading inferences. By offering rigorous theoretical insights and extensive experimental validation, this work deepens the conceptual understanding of CUPED. Notably, the recommended methodologies have been successfully deployed and integrated into ByteDance's experimentation platform. 

\end{abstract}

%%
%% The code below is generated by the tool at http://dl.acm.org/ccs.cfm.
%% Please copy and paste the code instead of the example below.
%%
\begin{CCSXML}
<ccs2012>
   <concept>
       <concept_id>10002950.10003648.10003662.10003666</concept_id>
       <concept_desc>Mathematics of computing~Hypothesis testing and confidence interval computation</concept_desc>
       <concept_significance>300</concept_significance>
       </concept>
 </ccs2012>
\end{CCSXML}

\ccsdesc[300]{Mathematics of computing~Hypothesis testing and confidence interval computation}

%% Keywords. The author(s) should pick words that accurately describe
%% the work being presented. Separate the keywords with commas.
\keywords{Online controlled experiment, Variance reduction, A/B testing, CUPED, Regression adjustment, Multi-arm experiment, Two-stage sampling}
%% A "teaser" image appears between the author and affiliation
%% information and the body of the document, and typically spans the
%% page.

% \received{20 February 2007}
% \received[revised]{12 March 2009}
% \received[accepted]{5 June 2009}

%%
%% This command processes the author and affiliation and title
%% information and builds the first part of the formatted document.
\maketitle

\section{Introduction}

Online controlled experiments, or A/B testing, have become the empirical cornerstone for causal inference in the digital economy \citep{deng2013improving, xie2016improving}. Leading technology enterprises, such as Google, Microsoft, Meta, and ByteDance, rely on these frameworks to conduct tens of thousands of experiments daily, guiding critical trajectories for feature deployment, pricing optimization, and user experience refinement \citep{micro, meta2024, kohavi2020trustworthy}. When executed with rigor, these experiments provide the gold standard for isolating causal effects from environmental noise, ensuring that strategic decisions are rooted in robust data rather than mere intuition \citep{box2005statistics}. 

However, the efficacy of A/B testing is frequently hampered by the inherent volatility of online user behavior \citep{freedman2008regression, deng2013improving}. The standard difference-in-means estimator \citep{splawa1990application, welch1947generalization}, while theoretically unbiased, often suffers from prohibitive sampling variance. In practice, this leads to a ``needle in a haystack'' problem: subtle but commercially significant treatment effects are often obscured by noise, resulting in insufficient statistical power and prolonged experimental durations that delay product iteration and increase opportunity costs \citep{deng2013improving, xie2016improving}. 

To mitigate these challenges, variance reduction techniques have emerged as a vital area of research \citep{freedman2008regression, lin2013agnostic, deng2013improving, chernozhukov2018double, kohavi2023online}. Among these, the Controlled-experiment Using Pre-Experiment Data (CUPED) \citep{deng2013improving} method has become the industry benchmark. By leveraging the correlation between pre-experimental covariates and post-experimental outcomes, CUPED constructs a more precise estimator without sacrificing unbiasedness \citep{deng2023augmentation}. While its adoption is widespread, as documented by Netflix \citep{xie2016improving}, Microsoft \citep{micro}, and Meta \citep{meta2024}, its implementation is far from monolithic. A critical, yet often overlooked, tension exists between model-free approaches and regression-based frameworks \citep{deng2023augmentation}. In model-free applications, practitioners frequently resort to ``plug-in'' variance estimators that substitute sample statistics for true parameters \citep{xie2016improving}. Conversely, in the regression context, the foundational work of Freedman cautioned that variance estimates based on ordinary least square (OLS) can be asymptotically biased and lead to invalid inferences under certain conditions \citep{freedman2008regression}. While Lin \citep{lin2013agnostic} later demonstrated that utilizing Huber-White ``sandwich'' estimators \citep{eicker1967limit, huber1967behavior} within an interaction-model framework can resolve these inconsistencies, the practical choice between these disparate methodologies remains a source of confusion for many analysts. 

The complexity intensifies in multi-arm experimental designs, which are increasingly prevalent in modern platforms but remain under-researched compared to the classic two-sample case \citep{freedman2008regression2, lopez2017estimation, feng2012generalized}. In scenarios involving multiple treatment groups, a fundamental question arises regarding the optimal scope of covariate adjustment: should the global covariate mean be estimated using the entire sample, or should adjustments be localized to specific comparison pairs? Naive extensions from two-arm theory to multi-arm practice are not guaranteed to hold. Indeed, improper specification of the adjustment mechanism can lead to a paradoxical loss of efficiency or, more critically, compromise the type I error rate, rendering experimental conclusions untrustworthy. 

Furthermore, two-stage sampling represents another complex scenario frequently encountered in large-scale online experiments \citep{chauvet2020inference, ohlsson1989asymptotic}. In this process, a platform first selects a specific subset of its users to be eligible for an experiment and subsequently performs random assignment only among those selected individuals. This design is primarily driven by the need to mitigate operational risk, restricting high-risk interventions to a small fraction of the population, and to comply with regulatory or legal constraints that limit user exposure to certain conditions. This two-stage approach significantly complicates the distribution of the ATE estimator by introducing compound stochasticity: the randomness of the initial selection phase and the randomness of the subsequent treatment assignment. These layered sources of variation can lead to unreliable inference and an inflation of Type I errors if the variance estimation fails to account for the underlying sampling mechanism \citep{ding2017bridging, ding2024first}. 

This paper responds to these methodological ambiguities by providing a systematic investigation into the deployment of CUPED across diverse and complex settings. Our main contributions are as follows: 

\begin{itemize}
\item We synthesize current CUPED practices across the technology companies, highlighting the critical divergence between industry heuristics and theoretical ideals. 

\item We delineate the conditions under which model-free or regression-based adjustments should be preferred, specifically emphasizing the necessity of sandwich estimators to maintain the validity of inference. 

\item We conduct a comparative analysis of estimation forms, demonstrating that group-specific adjustments yield maximum efficiency for calculating group means and relative lift. 

\item We provide rigorous guidelines for ATE inference in complex scenarios: 
  \begin{itemize}
     \item In multi-arm experiments, we prove that utilizing the full-sample covariate mean in the adjustment term is essential for optimal precision. 
     \item In two-stage sampling experiments, we derive a corrected split-sample variance estimator that accounts for compounded randomness, ensuring reliable inference results. 
   \end{itemize}
\end{itemize}

\section{Preliminaries}
\label{sec2}

\subsection{Design-Based Framework}

This study adopts a design-based framework under a completely randomized design. For each unit $i$ in a population of size $n$, let $Y_i(0)$ and $Y_i(1)$ denote the potential outcomes under control and treatment, respectively. The treatment assignment is indicated by a binary variable $T_i$, where $T_i = 1$ if unit $i$ is assigned to the treatment group and $T_i = 0$ otherwise. 

In the design-based approach, the potential outcomes $\{Y_i(0), Y_i(1)\}_{i=1}^n$ are treated as fixed, non-random constants. Randomness enters the system solely through the assignment mechanism $T_i$. The observed outcome for each unit is defined by 
$$Y_i = T_iY_i(1) + (1 - T_i)Y_i(0), \quad i=1, \ldots, n. $$
The finite-population means and the average treatment effect (ATE) are defined as 
$$\mu_1 = n^{-1} \sum_{i=1}^n Y_i(1), \quad \mu_0 = n^{-1} \sum_{i=1}^n Y_i(0), \quad \delta = \mu_1 - \mu_0.$$

We assume that each unit $i$ has an observable covariate $X_i$, unaffected by treatment. Since both potential outcomes cannot be observed simultaneously for any unit, we rely on sample-level estimates. The sample means for the observed outcomes and covariates are 
$$\oy_1 = n_1^{-1} \sum_{i=1}^n T_i Y_i, \quad \oy_0 = n_0^{-1} \sum_{i=1}^n (1 - T_i) Y_i, \quad \oy = n^{-1} \sum_{i=1}^n Y_i, $$
$$\ox_1 = n_1^{-1}  \sum_{i=1}^n T_i X_i \quad \ox_0 = n_0^{-1} \sum_{i=1}^n (1 - T_i) X_i,  \quad \ox = n^{-1} \sum_{i=1}^n X_i, $$
where $n_1 = \sum_{i=1}^n T_i$ and $n_0 = \sum_{i=1}^n (1 - T_i)$ denote the fixed group sizes. Let $p_1 = n_1/n$ and $p_0 = n_0/n$ represent the assignment probabilities. 

To ensure asymptotic properties, we impose the following regularity conditions, which are standard in design-based literature \citep{freedman2008regression, lin2013agnostic}. In principle, there should be an extra subscript to index the sequence of populations. Like Freedman and Lin \citep{freedman2008regression, lin2013agnostic}, we omit the extra subscripts for simplification. 

\begin{cond}
There is a finite bound $L$ such that for all $n$, the fourth moments of potential outcomes and covariates are bounded: 
$$
n^{-1} \sum_{i=1}^nY_i(1)^4 < L, \quad n^{-1} \sum_{i=1}^nY_i(0)^4 < L, \quad n^{-1} \sum_{i=1}^nX_i^4 < L. 
$$
 \end{cond}
 
\begin{cond}
The population means of $Y_i(1)$, $Y_i(0)$, $Y_i(1)^2$, $Y_i(0)^2$, $Y_i(1)Y_i(0)$, $Y_i(1)X_i$, $Y_i(0)X_i$, and $X_i^2$ converge to finite limits. 
 \end{cond}
 
 \begin{cond}
The assignment probability $n_1/n$ converges to a limit $p_1$, with $0 < p_1 < 1$. 
 \end{cond}

Finite-population asymptotic results are statements about randomized experiments on an imaginary infinite sequence of finite populations, with increasing $n$. Assuming Conditions $1-3$, we define the limiting parameters for $k = 0, 1$: 
$$\theta_k = \lim_{n\rightarrow \infty} \frac{\sum_{i=1}^n(Y_i(k) - \mu_k)(X_i - \ox)}{\sum_{i=1}^n(X_i - \ox)^2}, S_x^2 = \lim_{n\rightarrow \infty} \frac{1}{n-1}\sum_{i=1}^n(X_i - \ox)^2, $$
$$S_k^2 = \lim_{n\rightarrow \infty} \frac{1}{n-1}\sum_{i=1}^n(Y_i(k) - \mu_k)^2, S_{\delta}^2 = \lim_{n\rightarrow \infty} \frac{1}{n-1}\sum_{i=1}^n(Y_i(1)-Y_i(0) - \delta)^2, $$
$$$$
$$S_{k, x}^2 = \lim_{n\rightarrow \infty} \frac{1}{n-1}\sum_{i=1}^n\lc Y_i(k) - \mu_k - \theta_k(X_i - \ox)\rc^2, $$
$$S_{\delta, x}^2 = \lim_{n\rightarrow \infty} \frac{1}{n-1}\sum_{i=1}^n\lc Y_i(1)-Y_i(0) - \delta - (\theta_1 - \theta_0)(X_i - \ox)\rc^2. $$

\subsection{CUPED Overview}

The core intuition of CUPED is to leverage pre-experiment data to construct an unbiased ATE estimator with higher precision than  the simple difference-in-means. Consider an adjusted estimator for the treatment group mean: 
\begin{equation}
\ty_1 = \oy_1 - \theta \ox_1 + \theta \ox, \notag
\end{equation}
where $\theta$ is a constant. This adjustment preserves unbiasedness while yielding the variance 
\begin{equation}
\label{var1}
\mathrm{var}(\ty_1) =  \mathrm{var}(\oy_1) + \theta^2 \mathrm{var}(\ox_1) - 2\theta \mathrm{cov}(\oy_1, \ox_1), \notag
\end{equation}
where $\text{cov}$ and $\text{var}$ represent population covariance and variance, respectively. 

Minimizing this quadratic in $\theta$ yields the optimal coefficient $\mathrm{cov}(\oy_1, \ox_1)/\mathrm{var}(\ox_1)$, resulting in 
$$
\tv(\ty_1) = \mathrm{var}(\oy_1)\Big(1 - \frac{\text{cov}^2(\oy_1, \ox_1)}{\text{var}(\oy_1)\text{var}(\ox_1)}\Big). 
$$
Consequently, any non-zero covariance between the outcome and covariate enables variance reduction by filtering out the explainable variability attributable to the covariate.

\subsection{CUPED Implementation Across Companies}
\label{practice}

In industry, while the principle of CUPED remains consistent with Deng et al. \citep{deng2013improving}, variations in the estimation of $\theta$ significantly impact both variance reduction and inferential validity. We denote the ATE estimator for the $j$-th method as $\widehat{\delta}_j$, and the corresponding adjustment coefficients for the treatment and control groups as $\widehat{\theta}_{j, 1}$ and $\widehat{\theta}_{j, 0}$, respectively. 

 \textbf{Full-sample Estimator}. Common in platforms like Booking \citep{jackson2018booking}, Statsig \citep{Statsig}, and Walmart \citep{Walmart}, this approach uses a single $\theta$ estimated from the entire sample:
 $$
\widehat{\theta}_{1,1}=\widehat{\theta}_{1,0}= \frac{\widehat{\text{cov}}(\overline Y, \overline X)}{\widehat{\text{var}}(\overline X)}, 
$$
where $\wc(\oy, \ox) = n^{-1}(n - 1)^{-1}\sm(Y_i - \oy)(X_i - \ox)$ and $\wv(\ox) = n^{-1}(n - 1)^{-1}\sm(X_i - \ox)^2$. 
The resulting adjusted estimator is 
$$
\wdd_1 = \oy_1 - \oy_0 - \wt_{1, 1}(\ox_1 - \ox_0). 
$$
While straightforward, this method does not guarantee variance reduction under treatment effect heterogeneity. Notably, at ByteDance, approximately $0.1\%$ of experiments exhibited anomalous variance inflation, with standard errors expanding more than threefold compared to the difference-in-means estimator.

 \textbf{Pooled-sample Estimator}. To ensure variance reduction, this approach explicitly minimizes the variance of the ATE estimator, $\tv\big( \oy_1 - \oy_0 - \theta(\ox_1-\ox_0) \big)$, yielding the optimal coefficient estimator \citep{deng2021improving, xie2016improving}: 
$$
\wt_{2, 1} = \wt_{2, 0} = \frac{\wc(\oy_1, \ox_1) + \wc(\oy_0, \ox_0)}{\wv(\ox_1) + \wv(\ox_0)}, 
$$
where $\wc(\oy_k, \ox_k)$ and $\tv(\ox_k)$ are group-specific sample covariance and variance. For $k = 0, 1$: 
$$\wv(\ox_k) = n_k^{-1}(n_k - 1)^{-1}\sm\mathbb I(T_i = k)(X_i - \ox)^2, $$
$$\wc(\oy_k, \ox_k) = n_k^{-1}(n_k - 1)^{-1}\sm\mathbb I(T_i = k)(Y_i - \oy)(X_i - \ox). $$
The ATE estiamtor $\wdd_2$ could be obtained analogously to $\wdd_1$. While $\widehat{\delta}_2$ theoretically guarantees variance no greater than the unadjusted difference-in-means, recent studies \citep{zhang2025bridging} suggest that its standard variance estimator can be overly conservative, potentially diminishing statistical power.

 \textbf{Split-sample Estimator}. This method employs group-specific coefficients to accommodate different outcome-covariate relationships \citep{meta2024, micro}: 
$$
\text{argmin}_{\theta_1, \theta_0} \tv \lc \big(\oy_1 - \theta_1(\ox_1-\ox)\big) -  \big(\oy_0 - \theta_0(\ox_0-\ox)\big) \rc. 
$$
A specific solution that minimizes not only group-level variability but also the standard variance estimator is selected as 
$$
\wt_{3, 1}=\frac{\wc(\oy_1, \ox_1)}{\wv(\ox_1)}, \quad \wt_{3, 0}=\frac{\wc(\oy_0, \ox_0)}{\wv(\ox_0)}, 
$$
then it follows: 
$$
\widehat \delta_3 = \lc\oy_1 - \wt_{3, 1}(\ox_1 - \ox)\rc - \lc\oy_0 - \wt_{3, 0}(\ox_0 - \ox)\rc. 
$$
Zhang et al. \citep{zhang2025bridging} confirm that this approach is not only asymptotically equivalent to the pooled version but also produces less conservative variance estimates, leading to superior power in practice.

 \textbf{Regression Adjustment without Interaction}. A common alternative to model-free adjustment is the use of OLS to fit the following regression model: 
$$
Y_i = \beta_0 + \delta T_i + \beta_x(X_i - \ox) + \varepsilon_i. 
$$
This approach is essentially the regression analog of full-sample estimation. While asymptotically equivalent to $\widehat{\delta}_1$ under the design-based framework, it is susceptible to the Freedman critique \citep{freedman2008regression}: under certain conditions of model misspecification, this adjustment can paradoxically increase the variance compared to the difference-in-means.

 \textbf{Regression Adjustment with Interaction}. To address this potential variance inflation, Lin \citep{lin2013agnostic} proposed incorporating a treatment-covariate interaction term into the model: 
$$
Y_i = \beta_0 + \delta T_i + \beta_x(X_i - \ox) + \beta_IT_i(X_i - \ox) + \varepsilon_i. 
$$
The resulting ATE estimator effectively performs group-specific adjustments. Lin proved that this specification is asymptotically superior (or at least equal) to the unadjusted difference-in-means in terms of variance, mirroring the advantages seen in the split-sample estimation.

\section{Regression-Based or Model-Free Approaches}
\label{sec3}

As detailed in the preceding sections, technology companies employ a diverse array of strategies for implementing CUPED. The first three techniques operate without relying on any underlying models or distributional requirements, they derive adjustments directly from observed sample statistics, differing primarily in how they aggregate the covariance structure between outcomes and covariates. By contrast, the remaining two approaches utilize linear regression, modeling the relationship between the outcome and covariates through a linear functional form and estimating parameters via OLS. 

While traditional linear regression theory often invokes stringent requirements, such as independently and identically distributed errors, it is a common misconception that model-free methods are inherently more robust in experimental settings. Recent research \citep{zhang2025bridging} demonstrates that for mean-based metrics, estimators from both categories exhibit asymptotic equivalence. Specifically, the pooled-sample, split-sample, and interaction-inclusive regression estimators asymptotically converge to the same limiting distribution. Similarly, the full-sample and non-interaction regression estimators are asymptotically identical (see Figure 1 in \citep{zhang2025bridging}). Crucially, this equivalence holds even under model misspecification, ensuring that regression-based results remain as dependable as their model-free counterparts within the design-based framework. 

For more complex measures, such as ratio metrics like click-through rates, regression modeling can prove cumbersome to implement and interpret \citep{deng2023augmentation}. In these instances, model-free techniques are superior due to their flexibility, allowing seamless integration with the delta method to compute adjustments and test statistics. Conversely, regression offers significant advantages when incorporating high-dimensional covariates, where established OLS software packages streamline the adjustment process far more efficiently than manual model-free calculations. Ultimately, the choice between these paradigms should not be dictated by concerns over violated model assumptions, which are mitigated by asymptotic properties; rather, the decision should hinge on computational efficiency and the mathematical structure of the metric under study.

\section{Construction of the Adjusted Outcome Mean}
\label{sec4}

Beyond the estimation of the ATE, practitioners are often tasked with reporting group means and relative lift. While different constructions of adjusted means may yield the same point estimate for the ATE, their performance in estimating group-level parameters and relative lift can diverge significantly. 

This section evaluates three prevalent methods for constructing adjusted group means, denoted by $\widetilde{Y}$. To ensure a robust comparison, we utilize adjustment coefficients from the split-sample method, as they align asymptotically with Lin's \citep{lin2013agnostic} interaction-inclusive regression and maximize testing power.

The three primary constructions observed in industry \citep{Statsig, Statsig2, Nubank} are: 
\begin{equation}\label{2}
\ty_1 = \oy_1 - \wt_{3, 1} \ox_1, \quad \ty_0 = \oy_0 - \wt_{3, 0} \ox_0. \notag
\end{equation}
\begin{equation}\label{3}
\ty_1^* = \oy_1 - \wt_{3, 1} ( \ox_1 - \ox ), \quad \ty_0^* = \oy_0 - \wt_{3, 0} (\ox_0 - \ox). \notag
\end{equation}
\begin{equation}\label{4}
\ty_1^{**} = \oy_1 - \wt_{3, 1} (\ox_1 - \ox) + \wt_{3, 0} (\ox_0 - \ox), \quad \ty_0^{**} = \oy_0. 
\end{equation}

The first construction, while simple, shifts the expectation of the metric away from the true population mean, making it unsuitable for calculating relative lift. Consequently, our analysis focuses on the comparison between the second and third constructions. 

The third construction is particularly popular in industry because it leaves the control group mean unchanged. For platforms serving users with varying levels of statistical expertise, this ``control-preserving'' property is a major psychological advantage: it eliminates the need to explain why the ``baseline'' number has moved after applying CUPED. 

While this approach preserves the ATE point estimate and its associated $p$-value, it incurs a significant efficiency cost. By shifting the entire adjustment term onto the treatment group, the third construction introduces unnecessary noise into the group-level and relative lift estimates. The following theorem precisely captures these distinctions. 

\begin{theorem}\label{relative}
Under the design-based framework, let $\tv(\ty^*_1/\ty^*_0)$ and $\tv(\ty^{**}_1/\ty^{**}_0)$ denote the true variances of the relative lift estimators for the second and third constructions, respectively. Let $\lb = \lim\limits_{n\rightarrow \infty} \mu_1$ and $\la = \lim\limits_{n\rightarrow \infty} \mu_0$. Then, the variance difference $n\big( \tv(\ty^{**}_1/\ty^{**}_0) - \tv(\ty^{*}_1/\ty^{*}_0)\big) $ converges to 
$$ \frac{\lb^2}{\la^2} \lc \frac{1}{\lb} - \frac{1}{\la} \rc^2 \frac{p_1}{p_0}\theta_0^2S_x^2 \ge 0.  $$ 
Similarly, $n\big(\tv(\ty^{**}_1) - \tv(\ty^{*}_1)\big) $ converges to 
$p_0^{-1}p_1\theta_0^2S_x^2$, and \\
$n\big(\tv(\ty^{**}_0) - \tv(\ty^{*}_0)\big) $ converges to 
$p_0^{-1}p_1\theta_0^2S_x^2$. 
\end{theorem}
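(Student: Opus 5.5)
The plan is to reduce everything to linear statistics of the design-based sample means, after three reductions: replace the estimated coefficients by their limits, isolate the common extra term that separates the two constructions, and apply the delta method for the ratio. First I would show, under Conditions 1--3 and the standard finite-population consistency of group-specific sample covariances (as in \citep{lin2013agnostic, zhang2025bridging}), that $\wt_{3,k} \to \theta_k$ in probability for $k=0,1$. Since $\ox_k - \ox = O_p(n^{-1/2})$, this gives
$$
\sqrt n\bigl(\wt_{3,k}-\theta_k\bigr)(\ox_k-\ox) = o_p(1).
$$
Hence, up to $o_p(n^{-1/2})$ errors that do not affect the limits of $n$ times the (asymptotic) variances, $\ty^*_1 \approx \oy_1 - \theta_1(\ox_1-\ox)$ and $\ty^*_0 \approx \oy_0-\theta_0(\ox_0-\ox)$. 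Moreover, with $D := \theta_0(\ox_0-\ox)$, the third construction satisfies $\ty^{**}_k \approx \ty^*_k + D$ for both $k=0,1$. This is exact for $k=0$ and for the coefficient-free part for $k=1$.

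The key structural fact is that $D$ is asymptotically uncorrelated with both $\ty^*_1$ and $\ty^*_0$. Write $e_i(k) = Y_i(k)-\mu_k-\theta_k(X_i-\ox)$, so that $\ty^*_k - \mu_k \approx \bar e_k$, the group-$k$ sample mean of $e_i(k)$. I would use the completely randomized design covariance formulas.
\begin{itemize}
\item Within a group, $\tc(\bar a_k,\bar b_k) = (n_k^{-1}-n^{-1})\,\frac{1}{n-1}\sum_i (a_i-\bar a)(b_i-\bar b)$.
\item Across groups, $\tc(\bar a_1,\bar b_0) = -\frac{1}{n}\cdot\frac{1}{n-1}\sum_i (a_i-\bar a)(b_i-\bar b)$.
\end{itemize}
Next use $\ox_0-\ox = p_1(\ox_0-\ox_1)$ together with the fact that $\theta_k$ is (the limit of) the population least-squares slope, so that $\sum_i e_i(k)(X_i-\ox) = o(n)$. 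Applying the formulas, both $n\,\tc(\bar e_0, D)$ and $n\,\tc(\bar e_1, D)$ tend to zero. Consequently $n\,\tv(\ty^{**}_k) - n\,\tv(\ty^*_k) = n\,\tv(D) + o(1)$ for $k=0,1$. Finally,
$$
n\,\tv(D) = n\,\theta_0^2\bigl(n_0^{-1}-n^{-1}\bigr)S_x^2 + o(1) \to \frac{p_1}{p_0}\theta_0^2 S_x^2,
$$
which gives the two group-level claims.

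For the relative lift I would apply the delta method to $g(a,b)=a/b$ at $(\lb,\la)$, assuming $\la\neq0$:
$$
\frac{\ty_1}{\ty_0} - \frac{\mu_1}{\mu_0} \approx \frac{\ty_1-\mu_1}{\la} - \frac{\lb(\ty_0-\mu_0)}{\la^2}.
$$
For the third construction the linear term equals that of the second plus
$$
D\Bigl(\frac{1}{\la}-\frac{\lb}{\la^2}\Bigr) = D\,\frac{\lb}{\la}\Bigl(\frac{1}{\lb}-\frac{1}{\la}\Bigr).
$$
By the uncorrelatedness shown above, the cross term vanishes asymptotically, leaving $\frac{\lb^2}{\la^2}\bigl(\frac{1}{\lb}-\frac{1}{\la}\bigr)^2\frac{p_1}{p_0}\theta_0^2S_x^2 \ge 0$.

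The main obstacle is rigor in passing from ``variance of the linearization'' to the ``true variance'' of $\ty_1/\ty_0$ and of terms involving the random $\wt_{3,k}$. Exact variances of ratios, or of products with estimated coefficients, need not even exist or converge without extra integrability. I would first interpret $\tv(\cdot)$ as the variance of the limiting normal distribution, via the finite-population CLT of \citep{lin2013agnostic} together with Slutsky's lemma, which makes the argument clean. If exact variances are required, I would add a uniform integrability argument using the fourth-moment bound in Condition 1 and a truncation of $\ty_0$ away from zero.
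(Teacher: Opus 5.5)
Your proposal is correct and follows essentially the same route as the paper: replace $\wt_{3,k}$ by $\theta_k$, and write both third-construction means as the second-construction means plus the common term $\theta_0(\ox_0-\ox)$. You then use the completely randomized covariance formulas (Lemma~\ref{lem3}) to show this term is asymptotically uncorrelated with $\ty^*_1$ and $\ty^*_0$, compute $n\,\tv\bigl(\theta_0(\ox_0-\ox)\bigr)\to p_0^{-1}p_1\theta_0^2S_x^2$, and finish with the delta method. Your direct linearization of the ratio shows the vanishing cross term more clearly than the paper's subtraction of two full delta-method formulas. Your sign $\ty^{**}_0=\ty^*_0+D$ is the right one: the paper's proof writes a minus there, although its final display implicitly uses the plus. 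Your caveat about reading $\tv(\cdot)$ as an asymptotic variance addresses a rigor point the paper passes over.
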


\begin{figure}[!htb]
  \centering 
  \subfigure[Empirical distributions of the group-mean estimators in the treatment group. ]
  {
   \includegraphics[width=0.225\textwidth]{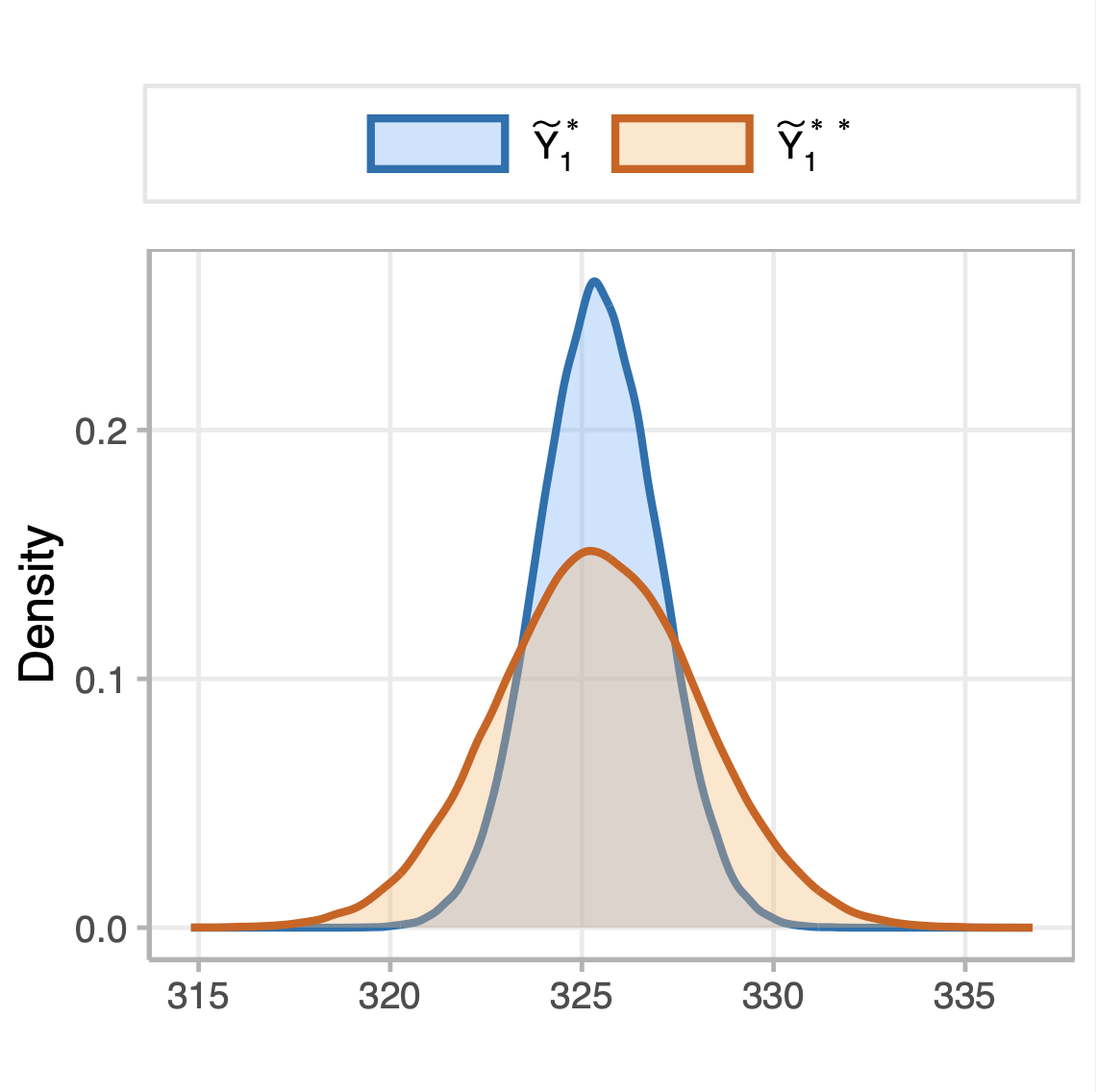}
  }
    \subfigure[Empirical distributions of the relative lift estimators. ]
  {
   \includegraphics[width=0.225\textwidth]{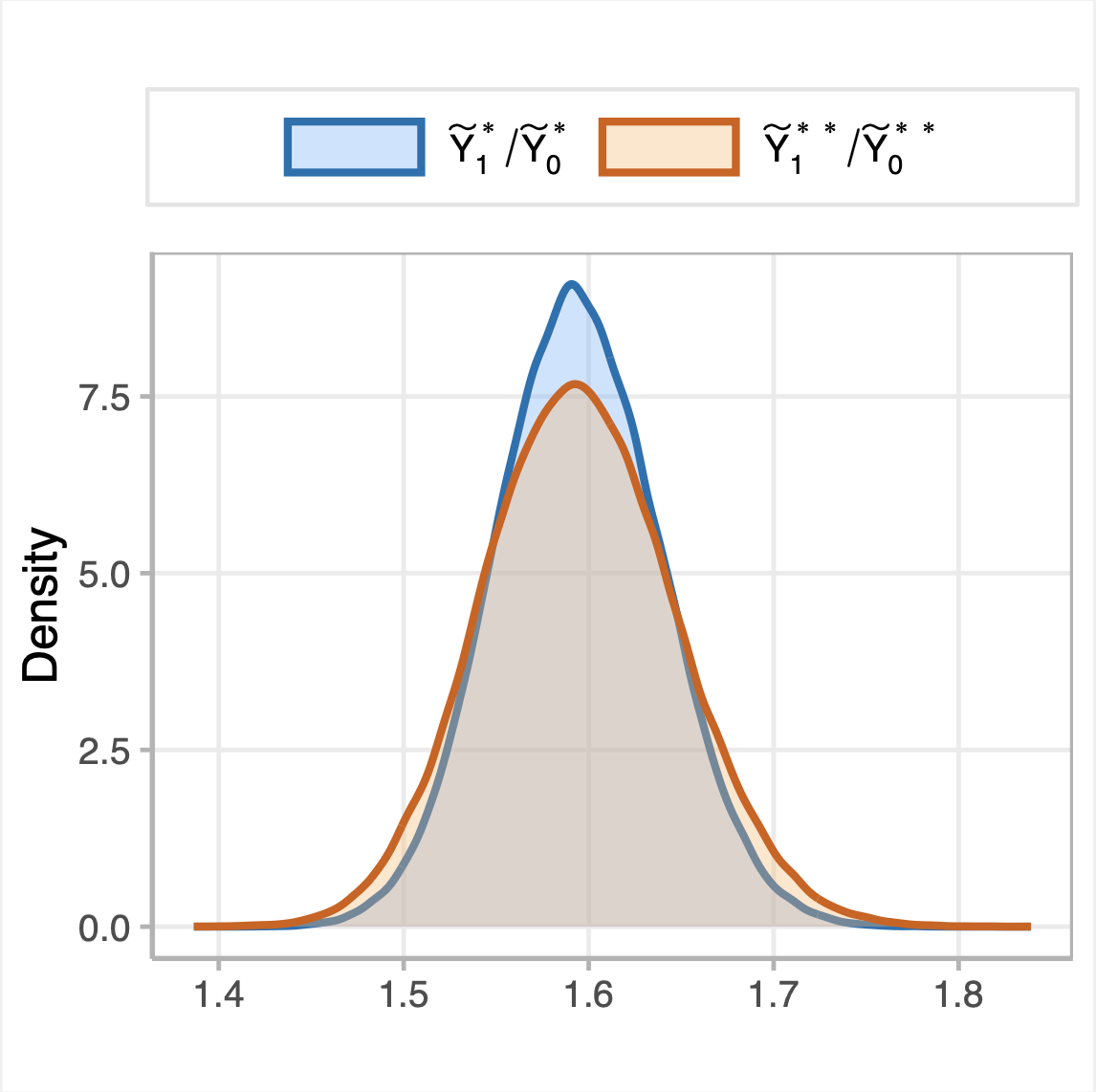}
  }

  \caption{Empirical distributions of (a) group-mean estimators in the treatment group and (b) relative lift estimators based on $B=10^5$ replications. The distribution of $\widetilde{Y}_1^{*}/\widetilde{Y}_0^{*}$ is tighter than $\widetilde{Y}_1^{**}/\widetilde{Y}_0^{**}$, demonstrating the superior estimation of the former. }
  \label{rela}
\end{figure}

Beyond this formal insight, an empirical illustration using real-world data intuitively highlights the benefits. Specifically, we repeatedly partition the dataset into treatment and control groups with equal probability, then compute the estimators of the group means and the relative lift over $B=10^5$ replications to obtain their empirical distributions. As shown in Figure \ref{rela}, the empirical distribution for estimates from the second construction is markedly tighter. While both yield unbiased results, the second delivers superior precision in real applications. 

Finally, when using regression-based methods to estimate group means, one can first obtain the estimated adjustment coefficients via OLS, then calculate the adjusted outcomes at the unit level, and use these to construct the group mean estimators and the relative lift estimator.

\section{Variance Estimation and Inferential Validity}
\label{sec5}

In model-free adjustment frameworks, variance estimation is relatively straightforward for mean-based metrics. Because adjustments are applied directly to individual observations, the variance of the ATE can be estimated using the standard sample variance of these transformed values. For ratio metrics, the delta method provides a reliable approximation \citep{ver2012invented}. Within the design-based inference framework, these estimators have been proven to be either consistent or asymptotically conservative, ensuring that Type I error rates remain controlled \citep{ding2024first}. 

However, when employing regression models to estimate treatment effects, a critical risk arises. Standard statistical software typically reports variance based on OLS assumptions (e.g., homoscedasticity), which can be highly unreliable under a design-based framework. Specifically, OLS variance estimators may overestimate or underestimate the true variance depending on the underlying data distribution \citep{freedman2008regression}. This challenge is mitigated by adopting robust variance estimation, commonly known as the sandwich estimator (or Huber-White estimator). The sandwich estimator is designed to remain valid under heteroscedastic patterns and does not require strong parametric assumptions regarding the residual variance \citep{eicker1967limit, huber1967behavior}. In the context of the interaction-inclusive regression model, the sandwich estimator is guaranteed to be asymptotically consistent or conservative, providing a theoretically sound basis for inference \citep{lin2013agnostic}. 

The discrepancy between OLS and sandwich estimators becomes particularly pronounced when residual variances diverge across groups and group sizes are imbalanced \citep{white1980heteroskedasticity, kauermann2000sandwich}. To demonstrate this, we conducted an experimental validation using a real-world dataset. Following the logic in Section \ref{sec4}, we performed $B = 10^4$ replications under the null hypothesis, manually introducing heteroscedasticity between treatment and control groups. Table \ref{varest} displays the Type I error rates across varying treatment assignment probabilities. 

\begin{comment}
To demonstrate this intuitively, we conducted a revealing simulation experiment. We simulated data for $10^4$ units, fixing potential outcomes under a defined generation rule and repeating random group assignments $10^4$ times. We explored balanced and imbalanced allocation scenarios, with treatment probabilities of $0.25$, $0.50$, and $0.75$. The data generation process follows 
$$
\begin{cases}
Y(0) = 1 + (X - \ox) + 0.2(X^2 - \overline{X^2}) + (\varepsilon_1 - \overline{\varepsilon}_1), \\
Y(1) = 1 + (X - \ox) + 0.1(\varepsilon_0 - \overline{\varepsilon}_0) + 0.9(\varepsilon_1 - \overline{\varepsilon}_1),  
\end{cases}
$$
where $\varepsilon_1 \overset{i.i.d}{\sim}\varepsilon_0\sim N(0, 1)$, $X\sim N(0, 4)$, and $\overline{A}$ represents the mean of variable $A$. 

Since the data were generated with zero ATE, the null hypothesis is exactly true in the population. Consequently, the empirical Type I error rates directly reflect the validity of the inference procedure. The resulting Type I error rates for using OLS and sandwich variance estimators are reported in Table \ref{varest}. 
\end{comment}

\begin{table}[!htb]
    \centering
   \caption{Comparison of Type I error rates between the sandwich and OLS variance estimators under the null hypothesis. Results are based on $B = 10^4$ replications with varying treatment assignment probabilities to assess performance under balanced and imbalanced allocations. }
    \resizebox{0.48\textwidth}{!}{
    \begin{tabular}{l|ccc}
    \toprule
    Probability of receiving treatment & $0.25$ & $0.50$ & $0.75$ \\
    \midrule
    Sandwich estimator & $4.20\%$ & $3.94\%$ & $4.64\%$ \\
    \midrule
    OLS variance estimator & $2.30\%$ & $3.93\%$ & $6.82\%$ \\
    \bottomrule
    \end{tabular}}
    \label{varest}
\end{table}

The results indicate that the sandwich estimator maintains stable, conservative control over the Type I error rate regardless of group allocation. In contrast, the OLS estimator is only reliable in balanced setups. In imbalanced scenarios, it either becomes excessively conservative, eroding detection power, or dangerously liberal, inflating false positives and leading to overconfident conclusions. This underscores why the sandwich estimator is the preferred tool for regression-based adjustment in industrial A/B testing. 

Finally, for simple mean metrics, we highlight a hybrid approach: practitioners may use OLS to obtain the ATE point estimate but calculate the variance by taking the sample variance of the CUPED-adjusted observations. As proved by Zhang et al. \citep{zhang2025bridging}, this hybrid method is asymptotically equivalent to the sandwich estimator derived from an interaction-inclusive regression model. Consequently, both paths offer statistically indistinguishable and robust inference in large samples.

\section{Multi-Arm Experiments}
\label{sec6}

In modern large-scale online experimentation, experimenters frequently assess several new strategies at once, with over half of all trials at ByteDance involving two or more treatment groups. This common practice raises a critical and practically meaningful question: when applying split-sample CUPED, should the pooled covariate mean be computed globally using all units in the experiment, or locally using only the two groups currently being compared? To maintain analytical clarity, we focus on a three-arm setup: a control group and two treatment groups. This framework captures the essential complexities of multi-arm inference, as any further increase in treatment arms follows an analogous logic. Throughout this section, we utilize the split-sample estimation approach; its group-specific adjustment coefficients ensure that the estimation for one group remains robustly shielded from potential data anomalies or treatment effect heterogeneity in another.

For each unit $i=1, \ldots, m$, let $A_i\in \{0, 1, 2\}$ denote the assignment. The notations $W_i(0)$, $W_i(1)$, and $W_i(2)$ represent the potential outcomes under control, treatment 1, and treatment 2, respectively. The observed response $W_i$ is expressed via the equation: 
$$
W_i=\sum_{k=0}^2\mathbb I(A_i=k)W_i(k), \quad i = 1, \ldots, m, 
$$
where $\mathbb I(\cdot)$ is the indicator function. Each unit is associated with a pre-treatment covariate $Z_i$, which is unaffected by the treatment assignment. Denote the ATE and group-specific means of the outcomes and covariates as 
$$
\tau = m^{-1} \sum_{i=1}^m \lc W_i(1) - W_i(0)\rc, \quad \oz = m^{-1} \sum_{i=1}^m Z_i, 
$$
$$
\ow_k = m_k^{-1} \sum_{i=1}^m \mathbb I(A_i=k) W_i, \quad \oz_k = m_k^{-1} \sum_{i=1}^m \mathbb I(A_i=k) Z_i, 
$$
and the pooled mean for the control and first treatment group by
$$
\oz_{0, 1} = (m_0+m_1)^{-1} \sum_{i=1}^m \big(\mathbb I(A_i=0) + \mathbb I(A_i=1)\big)Z_i, 
$$
where $m_k= \sum_{i=1}^m \mathbb I(A_i=k), k=0, 1, 2$. Furthermore, denote $q_k=m_k/m$. 

We now compare the performance of the split-sample estimation that uses the covariate mean computed over all groups versus the version that uses the covariate mean computed only over the two groups being directly compared. Mathematically, for $k = 0, 1$, let 
$$
\widetilde{W}_k^{*} = \ow_k - \widehat{\lambda}_k(\oz_k - \oz), \quad  \widehat{\tau}_1 = \widetilde{W}_1^{*} - \widetilde{W}_0^{*}, 
$$
$$
\widetilde{W}_k^{**} =  \ow_k - \widehat{\lambda}_k(\oz_k - \oz_{0, 1}), \quad  \widehat{\tau}_2 = \widetilde{W}_1^{**} - \widetilde{W}_0^{**}, 
$$
where $\widehat{\lambda}_k = \wc(\ow_k, \oz_k )/\wv(\oz_k)$ is the group-specific adjustment coefficient. 

Intuition suggests that $\widehat{\tau}_1$ should be more efficient because $\oz$ incorporates information from more units, providing a more stable estimate of the population covariate mean. Our theoretical results confirm this, inference based on $\widehat{\tau}_1$ have greater statistical power, and individual treatment effect estimates are also more precise. 
\begin{theorem}\label{61}
Let $\tv(\widehat{\tau}_1)$ and $\tv(\widehat{\tau}_2)$ denote the true variances of $\widehat{\tau}_1$ and $\widehat{\tau}_2$, respectively. Further, let $S_{1, z}^2, S_{0, z}^2, S_{\tau, z}^2$, and $S_z^2$ be defined analogously to $S_{1, x}^2, S_{0, x}^2, S_{\delta, x}^2$, and $S_x^2$. Then $m\tv(\widehat{\tau}_2)$ converges to 
$$
\frac{1}{q_1}S_{1, z}^2 + \frac{1}{q_0} S_{0, z}^2 - S_{\tau, z}^2 + \frac{q_2}{1-q_2}(\lambda_1 - \lambda_0)^2 S_z^2, 
$$
which is greater than or equal to the true asymptotic variance of $\sqrt{m}\widehat{\tau}_1$. The difference is 
$$
\frac{q_2}{1-q_2}(\lambda_1 - \lambda_0)^2 S_z^2. 
$$
\end{theorem}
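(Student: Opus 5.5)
The plan is to reduce both estimators to linear combinations of group means with fixed (limiting) coefficients, and then compute their design-based variances under complete randomization into three arms.

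\textbf{Step 1: replace estimated coefficients by their limits.} Under Conditions 1--3, adapted to three arms, $\wl_k \to \lambda_k$ in probability. Moreover $\oz_k - \oz$ and $\oz_k - \oz_{0,1}$ are $O_p(m^{-1/2})$. So replacing $\wl_k$ by $\lambda_k$ changes $\sqrt{m}\,\wa_1$ and $\sqrt{m}\,\wa_2$ only by $o_p(1)$. This is the same argument used for the split-sample estimator in the two-arm case (Zhang et al.), so the asymptotic variances can be computed from
\[
\wa_1^{\circ}=\ow_1-\ow_0-\lambda_1(\oz_1-\oz)+\lambda_0(\oz_0-\oz),\qquad
\wa_2^{\circ}=\wa_1^{\circ}+(\lambda_1-\lambda_0)(\oz-\oz_{0,1}).
\]
The second identity is exact algebra: $-\lambda_1(\oz_1-\oz_{0,1})+\lambda_0(\oz_0-\oz_{0,1})$ equals the first-estimator adjustment plus $(\lambda_1-\lambda_0)(\oz-\oz_{0,1})$.

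\textbf{Step 2: variance of $\wa_1^{\circ}$.} Write $\wa_1^{\circ}$ as $\overline{e}_1-\overline{e}_0$, where $e_i(k)=W_i(k)-\lambda_k(Z_i-\oz)$ and $\overline{e}_k$ is the arm-$k$ sample mean of $e_i(k)$. Apply the standard multi-arm Neyman variance formula for complete randomization: $\tv(\overline{e}_k)=(1/m_k-1/m)S^2_{e(k)}$ and $\tc(\overline{e}_1,\overline{e}_0)=-S_{e(1),e(0)}/m$. This gives
\[
m\,\tv(\wa_1^{\circ})\to \frac{S_{1,z}^2}{q_1}+\frac{S_{0,z}^2}{q_0}-S_{\tau,z}^2 .
\]

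\textbf{Step 3: the extra term.} Note $\oz-\oz_{0,1}=\frac{q_2}{1-q_2}(\oz_{0,1}-\oz_2)$, since $\oz$ is the weighted average of $\oz_{0,1}$ and $\oz_2$. Equivalently, it is $-\frac{m_2}{m}\cdot\frac{m}{m_0+m_1}(\oz_{0,1}-\oz)$ up to the same identity; I will use whichever form gives cleaner covariances.

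Compute two quantities with the same Neyman-type formulas:
\[
\tv(\oz_{0,1})=\Big(\frac{1}{m_0+m_1}-\frac1m\Big)S_z^2,
\]
so that $m\,\tv(\oz-\oz_{0,1})\to \frac{q_2}{1-q_2}S_z^2$. The main obstacle is the cross covariance $\tc(\wa_1^{\circ},\oz_{0,1})$, and I expect it to vanish asymptotically. To see this, write $\oz_{0,1}-\oz=(q_0(\oz_0-\oz)+q_1(\oz_1-\oz))/(1-q_2)$ and expand the covariances of $\overline{e}_k$ with $\oz_j$. The term $\tc(\overline{e}_k,\oz_k)$ is proportional to $S_{e(k),z}(1/m_k-1/m)$, and $S_{e(k),z}=0$ by the definition of $\lambda_k$ as a projection coefficient. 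The term $\tc(\overline{e}_k,\oz_j)$ for $j\neq k$ is $-S_{e(k),z}/m=0$. So every cross term is zero to leading order, because the residuals $e(k)$ are orthogonal to $Z$.

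\textbf{Step 4: conclusion.} Adding the pieces gives the stated limit of $m\,\tv(\wa_2)$. The difference from the asymptotic variance of $\sqrt{m}\,\wa_1$ is $\frac{q_2}{1-q_2}(\lambda_1-\lambda_0)^2S_z^2\ge0$.

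Finally, one point needs care to justify passing from true finite-sample variances to limits. I would either interpret the statement as asymptotic variances, or invoke uniform integrability via the fourth-moment bound in Condition 1 to control the remainder from $\wl_k-\lambda_k$.
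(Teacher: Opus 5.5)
Your proposal is correct and follows the paper's proof essentially step for step. You replace $\widehat{\lambda}_k$ by $\lambda_k$ and write $\widehat{\tau}_2$ as $\widehat{\tau}_1$ plus $(\lambda_1-\lambda_0)$ times $\overline{Z}_{0,1}-\overline{Z}$. You then kill the cross covariance using that the residuals $W_i(k)-\lambda_k(Z_i-\overline{Z})$ are asymptotically uncorrelated with $Z$, and finish with $m\,\mathrm{var}(\overline{Z}_{0,1})\to \frac{q_2}{1-q_2}S_z^2$. Your three-arm factors $1/m_k-1/m$ in Step 2 are more careful than the two-arm ratios $q_0/q_1$, $q_1/q_0$ in the paper's intermediate line. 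Two algebra slips are harmless because your variance computation never uses them: the extra term is $(\lambda_1-\lambda_0)(\overline{Z}_{0,1}-\overline{Z})$, not its negative, and $\overline{Z}-\overline{Z}_{0,1}=q_2(\overline{Z}_2-\overline{Z}_{0,1})$ rather than $\frac{q_2}{1-q_2}(\overline{Z}_{0,1}-\overline{Z}_2)$.
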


This result reveals that unless the treatment effects on the outcome-covariate relationship are perfectly homogeneous, using the local mean $\oz_{0,1}$ leads to a loss of efficiency. This is because the local mean fails to align with the optimal variance-minimizing target of the coefficients, wasting the precision gains offered by the multi-arm design.

While the loss of efficiency is problematic, a far more severe issue concerns variance estimation. Practitioners often assume that a standard post-CUPED variance estimator, which is valid in two-arm settings, can be safely applied to $\widehat{\tau}_2$. As Theorem \ref{62} demonstrates, this assumption may lead to an untrustworthy inference result. 

\begin{theorem}\label{62}
Let $\wv(\widehat{\tau}_1)$ and $\wv(\widehat{\tau}_2)$ denote the standard post-CUPED variance estimators of $\widehat{\tau}_1$ and $\widehat{\tau}_2$, defined as
$$
\begin{aligned}
&\quad \wv(\widehat{\tau}_1) = \wv(\widehat{\tau}_2) \\
&= \frac{1}{m_1(m_1-1)}\sum_{i=1}^m\mathbb I(A_i=1)\big( W_i - \ow_1 - \widehat{\lambda}_1(Z_i - \oz_1)\big)^2 \\
&\quad + \frac{1}{m_0(m_0-1)}\sum_{i=1}^m\mathbb I(A_i=0)\big( W_i - \ow_0 - \widehat{\lambda}_0(Z_i - \oz_0)\big)^2.
\end{aligned}
$$
Then $m\wv(\widehat{\tau}_1)$ and $m\wv(\widehat{\tau}_2)$ converges in probability to 
$$q_1^{-1}S_{1, z}^2 + q_0^{-1} S_{0, z}^2, $$
which exceeds the true asymptotic variance of $\sqrt{m}\widehat{\tau}_1$ by $S_{\tau, z}^2$, but does not necessarily exceed the asymptotic variance of $\sqrt{m}\widehat{\tau}_2$. The difference in the latter case is 
$$S_{\tau, z}^2 - \frac{q_2}{1-q_2}(\lambda_1 - \lambda_0)^2 S_z^2. $$
\end{theorem}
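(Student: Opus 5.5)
The plan is to show that $m\wv(\widehat{\tau}_1)$ converges in probability to its limit, and then obtain both comparisons by subtracting the asymptotic variances already available: the one for $\sqrt m\widehat\tau_1$ (implied by Theorem~\ref{61} and its difference term) and the one for $\sqrt m\widehat\tau_2$ stated in Theorem~\ref{61}. Since $\wv(\widehat{\tau}_1)=\wv(\widehat{\tau}_2)$ by definition, a single convergence statement covers both estimators.

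\textbf{Step 1: decompose each group term.} For each arm $k\in\{0,1\}$ I will write the summand of the $k$-th term as
$$
W_i-\ow_k-\wl_k(Z_i-\oz_k)=\big(W_i(k)-\ow_k-\lambda_k(Z_i-\oz_k)\big)-(\wl_k-\lambda_k)(Z_i-\oz_k).
$$
Here $\lambda_k$ is the finite-population projection coefficient; it converges to the limiting $\lambda_k$ by Condition~2.

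\textbf{Step 2: consistency of the sample moments.} Under complete randomization, the sample means over arm $k$ of $W_i(k)$, $Z_i$, $W_i(k)^2$, $Z_i^2$ and $W_i(k)Z_i$ are simple random samples without replacement from the finite population. By Chebyshev's inequality together with Condition~1 (bounded fourth moments), they converge in probability to the corresponding population means. The design-based WLLN of Freedman and Lin also applies; I take $m_k/m\to q_k\in(0,1)$ as the analogue of Condition~3. Consequences:
\begin{itemize}
\item $\wl_k-\lambda_k\to_p 0$;
\item $\oz_k-\oz\to_p 0$ and $\ow_k-\mu_k\to_p 0$;
\item $(m_k-1)^{-1}\sum_i\mathbb I(A_i=k)\big(W_i(k)-\mu_k-\lambda_k(Z_i-\oz)\big)^2\to_p S_{k,z}^2$.
\end{itemize}

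\textbf{Step 3: remove the centering and cross terms.} Expanding the square from Step 1 gives three kinds of error. The cross terms and the $(\wl_k-\lambda_k)^2$ term are each a $o_p(1)$ factor multiplied by an $O_p(1)$ sample second moment, so they vanish. Replacing $\ow_k,\oz_k$ by $\mu_k,\oz$ changes the mean square by $O_p$ of the squared centering error, which is $o_p(1)$. It follows that
$$
m\cdot\frac{1}{m_k(m_k-1)}\sum_i\mathbb I(A_i=k)(\cdots)^2\to_p q_k^{-1}S_{k,z}^2 .
$$
Summing over $k=0,1$ gives the stated limit $q_1^{-1}S_{1,z}^2+q_0^{-1}S_{0,z}^2$.

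I expect this bookkeeping to be the main technical step. It is a routine adaptation of Lin's consistency argument, which is what Zhang et al.\ used in the two-arm case. The only multi-arm feature is that group $2$ never enters the estimator, so the argument is unchanged.

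\textbf{Step 4: compare with the true variances.} By Theorem~\ref{61}, the asymptotic variance of $\sqrt m\widehat\tau_2$ is $q_1^{-1}S_{1,z}^2+q_0^{-1}S_{0,z}^2-S_{\tau,z}^2+\frac{q_2}{1-q_2}(\lambda_1-\lambda_0)^2S_z^2$. Theorem~\ref{61} also says this exceeds the asymptotic variance of $\sqrt m\widehat\tau_1$ by exactly $\frac{q_2}{1-q_2}(\lambda_1-\lambda_0)^2S_z^2$, so the asymptotic variance of $\sqrt m\widehat\tau_1$ is $q_1^{-1}S_{1,z}^2+q_0^{-1}S_{0,z}^2-S_{\tau,z}^2$.

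Subtracting from the limit in Step 3:
\begin{itemize}
\item the excess over the variance of $\sqrt m\widehat\tau_1$ is $S_{\tau,z}^2\ge 0$, so the estimator is conservative for $\widehat\tau_1$;
\item the difference for $\widehat\tau_2$ is $S_{\tau,z}^2-\frac{q_2}{1-q_2}(\lambda_1-\lambda_0)^2S_z^2$.
\end{itemize}

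The second difference can be negative, so the estimator need not be conservative for $\widehat\tau_2$. To confirm this, I would give an example with constant effects in the residual, so that $S_{\tau,z}^2=0$, but $\lambda_1\ne\lambda_0$ and $q_2>0$. Since $S_{\tau,z}^2$ is the variance of $W_i(1)-W_i(0)$ after removing the linear term $(\lambda_1-\lambda_0)(Z_i-\oz)$, such an example only requires the treatment effect to be exactly linear in $Z_i$ with a nonzero slope.
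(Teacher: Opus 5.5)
Your proposal is correct and follows the same route as the paper. You establish that $m\wv(\widehat\tau_1)=m\wv(\widehat\tau_2)$ converges in probability to $q_1^{-1}S_{1,z}^2+q_0^{-1}S_{0,z}^2$ by finite-population consistency of the within-arm sample moments, then subtract the asymptotic variances of $\sqrt m\widehat\tau_1$ and $\sqrt m\widehat\tau_2$ taken from Theorem~\ref{61}. Your Steps 1--3 spell out the consistency argument that the paper asserts in one line. Your example, a treatment effect exactly linear in $Z_i$ so that $S_{\tau,z}^2=0$ while $\lambda_1\neq\lambda_0$, gives a concrete instance of the ``not necessarily conservative'' claim, which the paper leaves implicit.
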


Theorem \ref{62} highlights a critical risk: the standard variance estimator of $\widehat{\tau}_2$ may underestimate the true uncertainty, leading to inflated Type I errors and untrustworthy $p$-values. To rectify this, we propose a correction term for the local approach: 
$$\wv(\widehat{\tau}_2)_{\text{corrected}} = \wv(\widehat{\tau}_2) + \frac{1}{m}\cdot\frac{q_2}{1-q_2}(\widehat{\lambda}_1 - \widehat{\lambda}_0)^2\cdot \frac{1}{m-1}\sum_{i=1}^m(Z_i-\oz)^2. $$
This correction restores the conservative property of the variance estimator, though it cannot recover the intrinsic efficiency lost by using $\widehat{\tau}_2$. 

To more intuitively illustrate the practical impact of choosing $\oz$ versus $\oz_{0, 1}$, we evaluate the performance of $\widehat{\tau}_1$ and $\widehat{\tau}_2$ through real-world data. 
We evaluate Type I error rates and power across three allocation schemes in Table \ref{61tab}, comparing standard variance estimator of $\widehat{\tau}_1$, standard variance estimator of $\widehat{\tau}_2$, and refined variance estimator $\tv(\widehat{\tau}_2)_{\text{corrected}}$. We omit $\widehat{\tau}_2$ without the variance correction from the empirical power comparisons because of its inflated Type I error rates.

\begin{table}[!htb]
    \centering
    \caption{Type I error rates and empirical power of ATE estimators $\widehat{\tau}_1$ (using full-sample covariate mean $\oz$), $\widehat{\tau}_2$ (using two-group covariate mean $\oz_{0, 1}$ without variance correction), and $(\widehat{\tau}_2)_{\text{corrected}}$ (with variance correction) across three allocation schemes $(q_0, q_1, q_2)$. }
    \resizebox{0.47\textwidth}{!}{
    \begin{tabular}{l|ccc|cc}
    \toprule
    \multirow{2}{*}{$(q_0, \ q_1, \ q_2)$} & \multicolumn{3}{c|}{Type I error rates} & \multicolumn{2}{c}{Empirical power} \\
   \cmidrule(lr){2-4} \cmidrule(lr){5-6}
    & $\widehat{\tau}_2$ & $(\widehat{\tau}_2)_{\text{corrected}}$ & $\widehat{\tau}_1$ & $(\widehat{\tau}_2)_{\text{corrected}}$ & $\widehat{\tau}_1$ \\
    \midrule
    $(0.25, 0.25, 0.50)$ & $18.46\%$ & $4.64\%$ & $4.18\%$ & $36.26\%$ & $68.37\%$ \\
    $(0.33, 0.33, 0.33)$ & $13.72\%$ & $4.41\%$ & $3.81\%$ & $53.32\%$ & $80.40\%$ \\
    $(0.40, 0.40, 0.20)$ & $9.48\%$ & $4.19\%$ & $3.59\%$ & $70.56\%$ & $87.11\%$ \\
    \bottomrule
    \end{tabular}}
    \label{61tab}
\end{table}

As shown in Table \ref{61tab}, the uncorrected variance estimator for $\widehat{\tau}_2$ produces Type I error rates that rises with $q_2$, reflecting the increasing influence of the omitted variance component. The corrected version, by contrast, reliably maintains the nominal error rate. In terms of power, $\widehat{\tau}_1$ substantially outperforms even the corrected version of $\widehat{\tau}_2$, underscoring the efficiency gains from using the covariate mean computed over all groups. 

\begin{figure}[htbp]
  \centering 
  \includegraphics[width=0.45\textwidth]{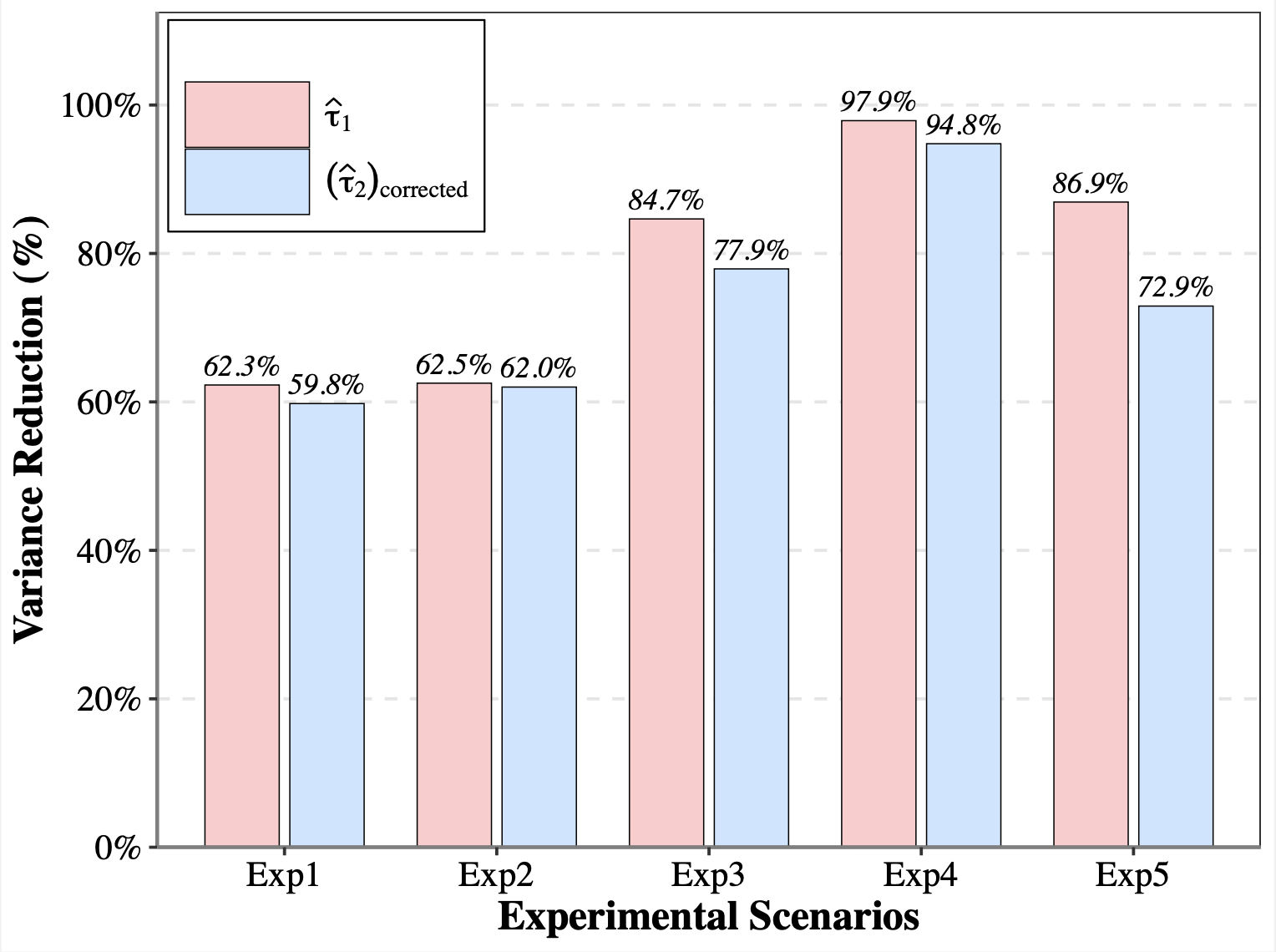}
  \caption{Variance reduction results of $\widehat{\tau}_1$ and $(\widehat{\tau}_2)_{\text{corrected}}$ relative to the standard difference-in-means estimator across five real-world experiments. }
  \label{real_vr}
\end{figure}

Finally, we utilize real-world data from ByteDance's experimentation platform to evaluate the performance of $\widehat{\tau}_1$ and $(\widehat{\tau}_2)_{\text{corrected}}$. All displayed experiments are derived from the platform's core business metrics, encompassing GMV and user feedback such as likes, dislikes, comments, and shares. To maintain data privacy, we have omitted the specific metric names and instead refer to them as Exp1 through Exp5. We must clarify that the selected experiments are intended for illustrative purposes; a significant reduction in additional variance is not achieved across all online experiments, as many of these include a large number of A/A tests. 

Figure \ref{real_vr} illustrates the variance reduction results of $\widehat{\tau}_1$ and $\widehat{\tau}_2$ with variance correction relative to the standard difference-in-means estimator across these five experiments. Even in the worst scenario, $\widehat{\tau}_1$ does not underperform relative to $(\widehat{\tau}_2)_{\text{corrected}}$ and consistently achieves a higher variance reduction ratio. In specific scenarios, such as Exp5, using the full-sample covariate mean yield an additional $10\%$ variance reduction. For a large-scale online experimentation platform, a $10\%$ gain represents a massive reduction in the traffic required to reach significance, directly lowering experimental costs and accelerating the product iteration cycle.

In summary, utilizing the full-sample covariate mean for adjustment under multi-arm settings is mathematically superior, as it obviates the need for complex variance corrections and maximizes the statistical sensitivity of the experiment simultaneously.

\section{Two-Stage Sampling}
\label{sec7}

This section addresses two-stage sampling, a sophisticated experimental design that is ubiquitous in industrial platforms. 
In large-scale digital platforms, it is rarely prudent to expose the entire user base to a single experiment. Instead, a two-stage sampling design is typically employed: first, the platform randomly selects a subset of users (e.g., selecting $100$ out of $1,000$ randomized buckets) to be eligible for the experiment; second, the selected individuals are randomized into treatment and control groups. This design serves as a critical safety valve, ensuring that if a new strategy induces a negative user experience or revenue loss, the impact is strictly confined to a small slice of the population. 

Under this framework, we treat the entire user base as the population of interest. Randomness is compounded: it arises not only from the group assignment but also from the initial selection of the experimental sample. Prior research on population-wide experiments suggests that the split-sample estimator $\wdd_3$ is generally superior to the full-sample $\wdd_1$ and pooled-sample $\wdd_2$ estimators. However, in two-stage sampling, a new nuance emerges: the covariate mean $\ox$, which is a fixed constant in population-wide tests, becomes a stochastic quantity. 
This randomness now contributes to the sampling variance of the ATE estimator. In contrast, $\wdd_1$ and $\wdd_2$ do not inherit this extra uncertainty because they rely on a shared adjustment coefficient across treatment arms. Thus, assuming that $\wdd_3$ automatically retains its full-population advantage may lead to misleading conclusions. 

A critical question is whether the randomness of the initial sampling stage erodes the efficiency gains of split-sample adjustment. Fortunately, our theoretical analysis demonstrates that $\widehat{\delta}_3$ retains its asymptotic advantage even under the added layer of selection noise. 

\begin{theorem}\label{71}
Assume the platform contains a total $N$ users. In each experiment, $n \le N$ individuals are randomly sampled for participation, with $n_1$ assigned to treatment and $n_0$ to control. Let $p_1=\lim\limits_{n \rightarrow\infty }n_1/n$, $p_0=\lim\limits_{n \rightarrow\infty }n_0/n$, and suppose $\lim\limits_{n, N \rightarrow\infty }n/N=p$. Define $\theta_{\text{Full}} = p_0\theta_0 + p_1\theta_1$ and $\theta_{\text{Pooled}} = p_1\theta_0 + p_0\theta_1$. Then as $n$ tends to infty, $n\tv(\wdd_1)$ converges to 
$$
\frac{1}{p_1}S_1^2 + \frac{1}{p_0}S_0^2 - pS_{\delta}^2 + \frac{1}{p_0p_1}\lc\theta_{\text{Full}}^2 - 2\theta_{\text{Full}}\theta_{\text{Pooled}}\rc S_x^2, 
$$
which is greater than or equal to the true asymptotic variance of $\sqrt{n}\wdd_2$ and $\sqrt{n}\wdd_3$. The difference is 
$$
\frac{1}{p_0p_1}(\theta_{\text{Full}} - \theta_{\text{Pooled}})^2S_x^2. 
$$
\end{theorem}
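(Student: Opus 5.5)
The plan is to remove the two-stage structure by rewriting it as a single completely randomized design on the whole platform. Drawing $n$ of the $N$ users by simple random sampling and then splitting them completely at random into $n_1$ treated and $n_0$ control units gives the same joint law of assignments as one completely randomized design on all $N$ users. That design has three arms: treatment (size $n_1$), control (size $n_0$) and ``not sampled'' (size $N-n$). Their limiting proportions are $pp_1$, $pp_0$ and $1-p$. All finite-population quantities ($\mu_k$, $\theta_k$, $S_k^2$, $S_\delta^2$, $S_x^2$) are taken over the $N$ users, with Conditions 1--3 imposed along the sequence of platform populations. This is the same device used for the three-arm analysis in Theorem \ref{61}, with the third arm now unobserved.

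\textbf{Step 1 (linearization).} First I show that each estimated coefficient converges in probability to its target. The full-sample coefficient $\wt_{1,1}$ is computed from the pooled experimental sample, which mixes $Y(1)$ and $Y(0)$ in proportions $p_1,p_0$, so its limit is $\theta_{\text{Full}}=p_0\theta_0+p_1\theta_1$. The pooled coefficient $\wt_{2,1}$ has limit $\theta_{\text{Pooled}}$: the weights $1/n_k$ in $\wv(\ox_k)$ swap the proportions. The split-sample coefficients satisfy $\wt_{3,k}\to\theta_k$. Since $\ox_1-\ox_0$, $\ox_k-\ox$ and the other centred means are all $O_p(n^{-1/2})$ under the design, replacing each $\wt$ by its limit changes the estimator by only $o_p(n^{-1/2})$.

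For $\wdd_3$ the extra subtlety is that the sample mean $\ox=p_1\ox_1+p_0\ox_0$ is now random. Substituting this expression, the coefficient on $\ox_1$ becomes $-\theta_1+p_1(\theta_1-\theta_0)=-\theta_{\text{Pooled}}$, and the coefficient on $\ox_0$ becomes $+\theta_{\text{Pooled}}$. So $\wdd_3$, like $\wdd_2$, is asymptotically equivalent to $\oy_1-\oy_0-\theta_{\text{Pooled}}(\ox_1-\ox_0)$. This computation is exactly what shows that sampling noise in $\ox$ does not erode the split-sample advantage. Meanwhile $\wdd_1$ is asymptotically equivalent to $\oy_1-\oy_0-\theta_{\text{Full}}(\ox_1-\ox_0)$.

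\textbf{Step 2 (exact variance for fixed $\theta$).} For a constant $\theta$, define transformed potential outcomes $R_i(k)=Y_i(k)-\theta X_i$. The Neyman variance formula for a completely randomized multi-arm design over $N$ units then gives
$$\tv\big(\oy_1-\oy_0-\theta(\ox_1-\ox_0)\big)\approx \frac{S^2_{R(1)}}{n_1}+\frac{S^2_{R(0)}}{n_0}-\frac{S^2_{R(1)-R(0)}}{N}.$$
Because $R(1)-R(0)=Y(1)-Y(0)$, the last term is $S_\delta^2/N$, which contributes $-pS_\delta^2$ after multiplying by $n$. Expanding $S^2_{R(k)}=S_k^2-2\theta\theta_kS_x^2+\theta^2S_x^2$ and using $1/p_1+1/p_0=1/(p_0p_1)$ and $\theta_1/p_1+\theta_0/p_0=\theta_{\text{Pooled}}/(p_0p_1)$, I obtain
$$n\tv\to \frac{S_1^2}{p_1}+\frac{S_0^2}{p_0}-pS_\delta^2+\frac{1}{p_0p_1}\big(\theta^2-2\theta\theta_{\text{Pooled}}\big)S_x^2.$$
Plugging in $\theta=\theta_{\text{Full}}$ gives the stated limit for $\wdd_1$. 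Plugging in $\theta_{\text{Pooled}}$, which minimizes this quadratic, gives the common limit for $\wdd_2$ and $\wdd_3$. Completing the square shows the difference is $(\theta_{\text{Full}}-\theta_{\text{Pooled}})^2S_x^2/(p_0p_1)\ge 0$.

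\textbf{Main obstacle.} The delicate part is Step 1 under the two-stage design: establishing consistency of the sample covariances and variances, which are computed on a random subsample, for their limits over the $N$ users. I would handle this by applying the finite-population weak law in the three-arm representation. The fourth-moment bound of Condition 1 controls the variance of quantities such as $n_k^{-1}\sum\mathbb I(T_i=k)Y_iX_i$. Joint asymptotic normality of the arm means then comes from the finite-population CLT. Consistent with how Theorems \ref{relative} and \ref{61} are phrased, ``true variance'' is interpreted as the asymptotic variance of $\sqrt n(\wdd_j-\delta)$.
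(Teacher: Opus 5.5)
Your proposal is correct and follows essentially the same route as the paper. The paper's indicators $P_i,Q_i$ encode exactly your three-arm completely randomized design on the $N$ users. It derives the same probability limits $\theta_{\text{Full}}$ and $\theta_{\text{Pooled}}$, and it reduces $\wdd_3$ to the $\theta_{\text{Pooled}}$-adjusted difference by the same rewriting $\ox_1-\ox=p_0(\ox_1-\ox_0)$, $\ox_0-\ox=p_1(\ox_0-\ox_1)$. The only difference is packaging: you get a single quadratic in $\theta$ from the transformed outcomes $Y_i(k)-\theta X_i$ and the multi-arm Neyman formula, then complete the square, whereas the paper computes $\tv(\oy_1-\oy_0)$ and $\tv(\ox_1-\ox_0)$ separately and substitutes each limiting coefficient in turn.
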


\begin{table*}[!ht]
    \centering
   \caption{Confidence interval coverage rates and empirical power under two-stage sampling. Experiments are conducted with fixed sampled size $n=10,000$, treated sample size $n_1=5,000$, and varying first-stage sampling probabilities $p=0.50, 0.25, 0.125$ (corresponding to platform population sizes \(N=2\times10^4, 4\times10^4, 8\times10^4\)), based on $B = 10^4$ replications. }
    \resizebox{0.85\textwidth}{!}{
    \begin{tabular}{l|cccc|ccc}
    \toprule
      \multirow{2}{*}{Total size $N$, sampling probability $p$} & \multicolumn{4}{c|}{Confidence Interval coverage rates} & \multicolumn{3}{c}{Empirical power}  \\
   \cmidrule(lr){2-5} \cmidrule(lr){6-8}
      & $\wdd_1$ & $\wdd_2$ & $\wdd_3$ & $(\wdd_3)_{\text{corrected}}$ & $\wdd_1$ & $\wdd_2$ & $(\wdd_3)_{\text{corrected}}$ \\
     \midrule
    $N=2\times 10^5, \qquad p = 0.50$ & $95.46\%$ & $95.45\%$ &  $94.82\%$ & $95.45\%$ & $83.39\%$ & $83.46\%$ &  $83.46\%$ \\
    $N=4\times 10^5, \qquad p = 0.25$ & $95.22\%$ & $95.22\%$ &  $94.62\%$ & $95.22\%$ & $82.81\%$ & $82.96\%$ & $82.96\%$ \\
    $N=8\times 10^5, \qquad p = 0.125$ & $95.02\%$ & $95.03\%$  & $94.33\%$ & $95.03\%$ & $83.04\%$ & $83.14\%$ & $83.14\%$ \\
    \bottomrule
    \end{tabular}}
    \label{ts sample}
\end{table*}

Theorem \ref{71} confirms that even when the experiment observes only a fraction of the population ($p < 1$), the split-sample approach $\wdd_3$ continues to achieve the same optimal efficiency properties documented in full-population settings. Furthermore, the pooled-sample estimator $\wdd_2$ is asymptotically equivalent to $\wdd_3$, and is at least as efficient as the full-sample estimator $\wdd_1$ and the difference-in-means. 

While efficiency is preserved, inferential validity is not. If we view the subsampled participants as the ``active'' groups and the non-sampled users as a ``hidden third group'', the two-stage setting becomes mathematically isomorphic to the multi-arm framework discussed in Section \ref{sec6}. Consequently, the standard variance estimator for $\wdd_3$ fails to account for the stochasticity of the non-participating population.

\begin{theorem}\label{72}
For $j=1, 2, 3$, denote the standard post-CUPED variance estimators as $\wv(\wdd_j)$. 
Then, $n\wv(\wdd_1)$ converges in probability to 
$$\frac{1}{p_1}S_1^2 + \frac{1}{p_0}S_0^2 + \frac{1}{p_0p_1}\lc\theta_{\text{Full}}^2 - 2\theta_{\text{Full}}\theta_{\text{Pooled}}\rc S_x^2, $$
which exceeds or equals the true asymptotic variance of $\sqrt{n}\wdd_1$ by $pS_{\delta}^2$. 
Likewise, $n\wv(\wdd_2)$ converges in probability to 
$$\frac{1}{p_1}S_1^2 + \frac{1}{p_0}S_0^2 - \frac{1}{p_0p_1}\theta_{\text{Pooled}}^2S_x^2, $$
which is greater than or equal to the true asymptotic variance of $\sqrt{n}\wdd_2$, the difference is $pS_{\delta}^2$. 
In contrast, $n\wv(\wdd_3)$ converges in probability to 
$$\frac{1}{p_1}S_1^2 + \frac{1}{p_0}S_0^2 - \lc \frac{1}{p_1}\theta_1^2 + \frac{1}{p_0}\theta_0^2\rc S_x^2, $$
which does not necessarily exceed or equal the true asymptotic variance of $\sqrt{n}\wdd_3$. The difference is 
$$pS_{\delta}^2 - (\theta_1-\theta_0)^2S_x^2, $$
revealing when the standard post-CUPED variance estimator becomes unreliable.
\end{theorem}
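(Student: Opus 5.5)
The plan is to obtain the probability limits of the three standard variance estimators by conditioning on the first-stage sample, and then compare them with the true asymptotic variances from Theorem \ref{71}.

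\textbf{Step 1 (conditioning on the sample).} Condition on the realized sample $\mathcal S$ of size $n$. Given $\mathcal S$, the second stage is a completely randomized experiment on a finite population of size $n$. Simple random sampling from the $N$ users preserves Conditions 1--3 for $\mathcal S$ with probability tending to one. Moreover, the finite-population moments of $\mathcal S$ (means, variances and covariances of $Y(1),Y(0),X$) converge in probability to the platform-level limits. Hence the sample-level quantities $\theta_k$, $S_k^2$ and $S_x^2$ computed on $\mathcal S$ coincide asymptotically with the population quantities.

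\textbf{Step 2 (limits of the estimators).} Each $\wv(\wdd_j)$ is a sum of within-group sample variances of adjusted observations, scaled by $1/n_1$ and $1/n_0$. Within group $k$, the sample moments converge to their sample-level limits, by the standard finite-population law of large numbers (as in Freedman and Lin). Therefore the coefficient estimates satisfy
\[
\wt_{1}\to\theta_{\text{Full}},\qquad \wt_{2}\to\theta_{\text{Pooled}},\qquad \wt_{3,k}\to\theta_k .
\]
For the pooled limit, note that $\wv(\ox_k)\approx S_x^2/n_k$, so weighting by $1/n_k$ gives $(\theta_1/p_1+\theta_0/p_0)/(1/p_1+1/p_0)=p_0\theta_1+p_1\theta_0$.

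Expanding $\frac1{n_k}\big(S_k^2-2\theta S_{kx}+\theta^2S_x^2\big)$ with $S_{kx}=\theta_kS_x^2$ gives, after multiplying by $n$,
\[
\frac{S_1^2}{p_1}+\frac{S_0^2}{p_0}+\Big(\frac{\theta^2-2\theta\theta_1}{p_1}+\frac{\theta^2-2\theta\theta_0}{p_0}\Big)S_x^2 .
\]
Using $\frac{\theta_1}{p_1}+\frac{\theta_0}{p_0}=\frac{\theta_{\text{Pooled}}}{p_0p_1}$ and $\frac1{p_1}+\frac1{p_0}=\frac1{p_0p_1}$, this simplifies to
\[
\frac{S_1^2}{p_1}+\frac{S_0^2}{p_0}+\frac{1}{p_0p_1}\big(\theta^2-2\theta\theta_{\text{Pooled}}\big)S_x^2 .
\]
Substituting $\theta=\theta_{\text{Full}}$ gives the first claim. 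Substituting $\theta=\theta_{\text{Pooled}}$ gives $-\theta_{\text{Pooled}}^2S_x^2/(p_0p_1)$. For $\wdd_3$, use $\theta_k$ in each group separately, which yields $-\big(\theta_1^2/p_1+\theta_0^2/p_0\big)S_x^2$.

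\textbf{Step 3 (comparison with the true variances).} Subtract the true asymptotic variances from Theorem \ref{71}.

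For $\wdd_1$, the true variance contains the term $-pS_\delta^2$, so the estimator exceeds it by exactly $pS_\delta^2$.

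For $\wdd_2$, Theorem \ref{71} gives the true variance as that of $\wdd_1$ minus $(\theta_{\text{Full}}-\theta_{\text{Pooled}})^2S_x^2/(p_0p_1)$. This equals
\[
\frac{S_1^2}{p_1}+\frac{S_0^2}{p_0}-pS_\delta^2-\frac{\theta_{\text{Pooled}}^2S_x^2}{p_0p_1},
\]
so the gap is again $pS_\delta^2$.

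For $\wdd_3$, the true variance equals that of $\wdd_2$. Subtracting it from the limit of $n\wv(\wdd_3)$ gives
\[
pS_\delta^2-\Big(\frac{\theta_1^2}{p_1}+\frac{\theta_0^2}{p_0}-\frac{\theta_{\text{Pooled}}^2}{p_0p_1}\Big)S_x^2 .
\]
The bracket reduces algebraically to $(\theta_1-\theta_0)^2$: multiply through by $p_0p_1$ and use $p_0+p_1=1$.

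\textbf{Main obstacle.} The delicate part is Step 1: making the convergence uniform over the random first-stage sample, so that conditional limits transfer to unconditional convergence in probability. I would handle this with a Chebyshev bound on the sampled moments, which is justified by the fourth-moment condition. A second check is needed for the $\wdd_3$ comparison: the interpretation of the extra stochasticity of $\ox$ must be consistent with Theorem \ref{71}. That is, the true asymptotic variance of $\wdd_3$ must indeed equal that of $\wdd_2$ even though $\ox$ is now random. I take this as given from Theorem \ref{71}.
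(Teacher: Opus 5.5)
Your proposal is correct and follows essentially the same route as the paper: you take probability limits of the within-group residual variances with the coefficients replaced by $\theta_{\text{Full}}$, $\theta_{\text{Pooled}}$, and $\theta_k$, then subtract the true asymptotic variances from Theorem \ref{71}; your explicit reduction of the bracket to $(\theta_1-\theta_0)^2$ spells out algebra the paper leaves implicit. Your conditioning step is extra care the paper skips (it just invokes consistency of sample moments), and it can be bypassed, since unconditionally the treated and control groups are simple random samples of the $N$ users, so the finite-population law of large numbers applies directly with no uniformity argument.
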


As the sampling probability $p$ decreases, the term $p S_{\delta}^2$ vanishes, causing the standard estimator to systematically underestimate the true variance whenever $\theta_1 \neq \theta_0$. This leads to an unacceptable inflation of Type I errors. The remedy is equally parallel. As in Theorem \ref{62}, adding an explicit correction term to the variance estimate yields a variance estimator that is either conservative or fully consistent for the true variance of $\wdd_3$. 

To illustrate this issue, we conducted an experiment using real-world data. We fixed the number of sampled individuals at $n = 10,000$ and the treated sample at $n_1=5,000$. We considered first-stage sampling probabilities $p = 0.5, 0.25, 0.125$, corresponding to the platform population sizes $N = 2 \times 10^4$, $4 \times 10^4$, and $8 \times 10^4$. 
For each setting, we performed $B=10^4$ replications and recorded both the empirical confidence-interval coverage and statistical power, summarized in Table \ref{ts sample}.

The results confirm the theoretical prediction: the uncorrected version of $\wdd_3$ systematically misestimates variance, leading to under-coverage of the nominal confidence intervals. In contrast, the corrected estimator $(\wdd_3)_{\text{corrected}}$ eliminates this bias and maintains appropriate coverage across all sampling probabilities. 

Furthermore, for estimators $\widehat{\delta}_1$, $\widehat{\delta}_2$, and $(\widehat{\delta}_3)_{\text{corrected}}$, the table reveals a broader pattern driven by the first-stage sampling probability, $p$. The variance estimators contain an intrinsic upward bias proportional to $pS_{\delta}^2$. As $p$ declines, this bias shrinks, causing the coverage rates of all estimators to converge from conservative levels to the nominal level. In the limiting case where $p$ approaches zero, this bias vanishes entirely, yielding a consistent variance estimator. Conceptually, this limit ($p \rightarrow 0$) corresponds to a scenario where the platform population is effectively infinite relative to the experiment size, naturally aligning the analysis with a super-population framework. The two-stage sampling design, however, provides a unified perspective by allowing $p$ to vary continuously from $1$ to $0$, creating a smooth bridge between the finite-population and super-population views. This makes transparent how the extent of variance overestimation depends directly and continuously on the sampling probability. 

In terms of power, the results follow naturally from the variance comparisons. Since $\wdd_2$ and $(\wdd_3)_{\text{corrected}}$ share the same asymptotic variance and use variance estimators that are asymptotically equivalent, both achieve virtually identical test power. Moreover, both consistently outperform $\wdd_1$, a direct consequence of their strictly smaller asymptotic variance. 

In summary, for two-stage sampling, practitioners should adopt the pooled-sample estimator or the corrected split-sample estimator. Both approaches retain the high statistical power of CUPED while ensuring that the corresponding confidence intervals remain valid even as the experiment scales relative to the platform’s total population.

\section{Conclusions}
\label{sec8}

This paper has provided a systematic investigation into five pivotal methodological and practical challenges often overlooked in the deployment of CUPED. As variance reduction becomes an essential requirement for high-velocity online experimentation, understanding the nuances of estimator selection and inferential validity is paramount. Through a rigorous integration of finite-population asymptotic theory and extensive empirical validation, we have delineated the boundary conditions for various CUPED implementations. 

Our findings yield several high-impact recommendations for experimental practitioners: 
\begin{itemize} 
\item \textbf{Methodological Selection:} While regression-based CUPED is superior for handling high-dimensional covariates due to its computational convenience, model-free variant remains the gold standard for ratio metrics, where it offers greater flexibility through the delta method. 
\item \textbf{Optimal Adjustment:} In constructing adjusted group means, we proved that the group-specific adjustment consistently outperforms other alternatives in terms of statistical efficiency, while providing narrower uncertainty intervals for relative lift. 
\item \textbf{Inferential Robustness:} For regression-based frameworks, the Huber-White sandwich estimator is indispensable. It safeguards against the risks of heteroscedasticity and group imbalance, where standard OLS variance estimators fail to provide valid coverage. 
\item \textbf{Multi-Arm Strategies:} In experiments with multiple treatment arms, utilizing the  full-sample covariate mean is strictly more efficient than pairwise adjustments. This approach simplifies the implementation while maximizing the use of available information. 
\item \textbf{Two-Stage Sampling:} We demonstrated that the split-sample estimator maintains its efficiency advantages even under compound stochasticity. However, to ensure valid inference in two-stage sampling designs, a variance correction term must be applied to account for the randomness of the initial sampling phase. 
\end{itemize}

These insights offer a robust and actionable framework for employing CUPED across increasingly complex testing environments. By standardizing the choice of adjustment strategies and variance estimators, organizations can significantly enhance their experimental sensitivity while eliminating the risk of drawing misleading causal conclusions. The methodologies proposed herein have been successfully integrated into ByteDance's experimentation platform, demonstrating their efficacy and scalability.

%%
%% The acknowledgments section is defined using the "acks" environment
%% (and NOT an unnumbered section). This ensures the proper
%% identification of the section in the article metadata, and the
%% consistent spelling of the heading.

% \begin{acks}

% \end{acks}

%%
%% The next two lines define the bibliography style to be used, and
%% the bibliography file.
\bibliographystyle{ACM-Reference-Format}
\bibliography{KDD2}

%%
%% If your work has an appendix, this is the place to put it.
\appendix
\onecolumn

\section{Appendix}

To facilitate the proof of our main results, we first establish several technical lemmas regarding the convergence and finite-population moments of the relevant estimators. 

\begin{lemma}[Finite-population version of the Weak Law of Large Numbers]\label{lemma1}
Assume Conditions $1-3$, the means over treatment or control of $Y_i(1), Y_i(0), X_i, Y_i(1)^2, Y_i(0)^2, X_i^2, Y_i(1)Y_i(0), Y_i(1)X_i$ and $Y_i(1)X_i$ converge in probability to the limits of the population means. 
\end{lemma}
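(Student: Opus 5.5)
The plan is a second-moment (Chebyshev) argument that uses the exact variance of a sample mean under simple random sampling without replacement. Under complete randomization, the treatment group $\{i: T_i=1\}$ is a uniformly random subset of size $n_1$ drawn from the $n$ units, and the control group is a uniformly random subset of size $n_0$. So it suffices to prove one general statement. Let $a_1,\ldots,a_n$ be a fixed array of constants (one population sequence per $n$). Suppose $n^{-1}\sum_i a_i^2$ is uniformly bounded and the population mean $\bar a = n^{-1}\sum_i a_i$ converges to a finite limit. Then the mean of the $a_i$ over a simple random sample of size $n_k$ converges in probability to $\lim \bar a$, for $k=0,1$. Each quantity listed in the lemma, namely $Y_i(1)$, $Y_i(0)$, $X_i$, $Y_i(1)^2$, $Y_i(0)^2$, $X_i^2$, $Y_i(1)Y_i(0)$, $Y_i(1)X_i$ and $Y_i(0)X_i$, is then handled as one such choice of $a_i$.

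The key step is the classical finite-population formula:
$$\mathbb{E}\,\bar a_k = \bar a, \qquad \tv(\bar a_k) = \frac{1}{n_k}\Big(1-\frac{n_k}{n}\Big)\frac{1}{n-1}\sum_{i=1}^n (a_i-\bar a)^2 .$$
This follows by writing $\bar a_k = n_k^{-1}\sum_i \mathbb I(T_i=k)a_i$ and using $\Pr(T_i=k)=n_k/n$ together with the pairwise inclusion probabilities $n_k(n_k-1)/\{n(n-1)\}$. Since $\sum_i(a_i-\bar a)^2\le \sum_i a_i^2$, we get $\tv(\bar a_k)\le \frac{n}{n_k(n-1)}\, n^{-1}\sum_i a_i^2$. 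By Condition 3, $n_k/n\to p_k\in(0,1)$, so this bound is $O(1/n)$ once $n^{-1}\sum_i a_i^2$ is bounded. Chebyshev's inequality then gives $\bar a_k - \bar a = O_p(n^{-1/2})$, and Condition 2 gives $\bar a \to \lim \bar a$. Together these yield the claim.

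The remaining work, and the only real obstacle, is checking that $n^{-1}\sum_i a_i^2$ is bounded for each choice of $a_i$. This is where Condition 1 enters.
\begin{itemize}
\item For the linear terms ($a_i = Y_i(k)$ or $X_i$), boundedness of second moments follows from bounded fourth moments by the Cauchy--Schwarz inequality.
\item For the squares and cross-products, such as $a_i=Y_i(1)X_i$, we need $n^{-1}\sum_i Y_i(1)^2X_i^2 \le \big(n^{-1}\sum_i Y_i(1)^4\big)^{1/2}\big(n^{-1}\sum_i X_i^4\big)^{1/2} < L$, again by Cauchy--Schwarz. The terms $Y_i(k)^2$, $X_i^2$ and $Y_i(1)Y_i(0)$ are handled the same way.
\end{itemize}
This is exactly why Condition 1 requires fourth moments rather than second moments. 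Finally, Condition 2 supplies the convergence of each population mean, including the limit of $n^{-1}\sum_i Y_i(0)X_i$ (the statement's second ``$Y_i(1)X_i$'' is presumably meant to be $Y_i(0)X_i$). This completes the argument for all listed quantities and both arms.
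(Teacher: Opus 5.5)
Your proposal is correct. The paper gives no argument of its own and simply defers to Lemma 1 of Lin's supplementary material; your argument is the standard proof of that result: a Chebyshev bound built on the exact without-replacement variance of a subgroup mean, with Cauchy--Schwarz turning the fourth-moment bounds of Condition 1 into bounded second moments of the squares and cross-products. One minor nit is that Condition 2 as written does not list the mean of $X_i$, so for that term your argument establishes $\overline{X}_k - \overline{X} \to 0$ in probability, and the existence of $\lim \overline{X}$ is presupposed by the lemma's wording rather than supplied by Condition 2.
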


\noindent\textbf{Proof}. This is a standard result in finite-population asymptotics. For a detailed rigorous treatment, we refer the reader to Lemma 1 in the Supplementary Material of \citet{Lin2012SupplementT}.

\begin{lemma}
Assume Conditions $1-3$. Let the limits of population covariance between potential outcomes and covariates be defined as: 
$$
C_{1, x} = \lm \frac{1}{n-1} \sm (Y_i(1) - \mu_1)(X_i - \ox), \quad C_{0, x} = \lm \frac{1}{n-1} \sm (Y_i(0) - \mu_0)(X_i - \ox). 
$$
Then, the limiting adjustment coefficients $\theta_1$ and $\theta_0$ satisfy: 
$$
\theta_1 = \frac{C_{1, x}}{S_x^2}, \quad \theta_0 = \frac{C_{0, x}}{S_x^2}. 
$$
\end{lemma}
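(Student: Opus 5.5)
This is essentially a matter of unpacking the definitions. By the definition in the preliminaries,
$$\theta_k = \lm \frac{\sm (Y_i(k)-\mu_k)(X_i-\ox)}{\sm (X_i-\ox)^2}.$$
We will divide both the numerator and the denominator by $n-1$, which leaves the ratio unchanged for every $n$. This gives, for each $n$,
$$\frac{\sm (Y_i(k)-\mu_k)(X_i-\ox)}{\sm (X_i-\ox)^2} = \frac{(n-1)^{-1}\sm (Y_i(k)-\mu_k)(X_i-\ox)}{(n-1)^{-1}\sm (X_i-\ox)^2}.$$

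The main step is to show that the numerator and denominator each converge, so that the limit of the ratio is the ratio of the limits. First expand the numerator:
$$(n-1)^{-1}\sm (Y_i(k)-\mu_k)(X_i-\ox) = \frac{n}{n-1}\Big(n^{-1}\sm Y_i(k)X_i - \mu_k\ox\Big).$$
Similarly, the denominator equals $\frac{n}{n-1}\big(n^{-1}\sm X_i^2 - \ox^2\big)$.

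Condition 2 guarantees that the population means of $Y_i(k)$, $Y_i(k)X_i$ and $X_i^2$ have finite limits. It does not list the mean of $X_i$ explicitly. Its limit is nevertheless implicit in the definition of $S_x^2$, which the paper treats as a well-defined limit, and it is also covered by Lemma~\ref{lemma1}. If needed, one can pass to a subsequence using the bound from Condition 1. Once these limits are in hand, the limits $C_{k,x}$ and $S_x^2$ exist and are finite.

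The one real obstacle is ensuring $S_x^2 > 0$, since otherwise $\theta_k$ would be an indeterminate $0/0$. I would treat positivity of $S_x^2$ as implicit: the paper divides by $S_x^2$ when defining $\theta_k$, so the definition of $\theta_k$ already presupposes a nondegenerate covariate. I would state this assumption explicitly in the proof.

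With $S_x^2>0$, the quotient rule for limits gives $\theta_k = C_{k,x}/S_x^2$ for $k=0,1$.
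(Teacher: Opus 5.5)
Your argument is correct, but it takes a different route from the paper. You treat the lemma as the deterministic statement it is. Since $\theta_k$ is defined in the preliminaries as the limit of a ratio of finite-population sums, dividing by $n-1$ makes the numerator and denominator exactly the sequences that define $C_{k,x}$ and $S_x^2$, and the quotient rule (given $S_x^2>0$) finishes the proof with no probability involved.

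The paper instead bounds the distance between the sample coefficient $\widehat{\mathrm{cov}}(\overline{Y}_1,\overline{X}_1)/\widehat{\mathrm{var}}(\overline{X}_1)$ and $C_{1,x}/S_x^2$. It sends that bound to zero with the finite-population law of large numbers, and then writes $\theta_1=\lim \widehat{\mathrm{cov}}(\overline{Y}_1,\overline{X}_1)/\widehat{\mathrm{var}}(\overline{X}_1)$. This route gives something extra that the later proofs use: consistency of the split-sample coefficient, which is what justifies replacing $\widehat{\theta}$ by $\theta$. However, its last equality is asserted rather than derived, and deriving it from the definition of $\theta_1$ is precisely your quotient-rule step. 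Its bound also needs the scaling $n_1$ rather than $n$, because $n\,\widehat{\mathrm{cov}}(\overline{Y}_1,\overline{X}_1)\to C_{1,x}/p_1$ in probability; the ratio itself is unaffected. Your version proves exactly the stated identity and makes explicit the condition $S_x^2>0$, which the paper uses silently when it divides by $S_x^2$.

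Two small streamlinings. First, you can drop the expansion via Condition~2 and the discussion of whether $\lim\overline{X}$ exists, since the lemma already posits $C_{k,x}$ and $S_x^2$ as limits. A subsequence argument would not by itself give a full-sequence limit anyway. Second, strictly it is the right-hand side $C_{k,x}/S_x^2$, not the definition of $\theta_k$, that presupposes $S_x^2\neq 0$. That definition only needs $\sum_{i=1}^n(X_i-\overline{X})^2\neq 0$ for each $n$.
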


\noindent\textbf{Proof}. We demonstrate the consistency for the treatment group and the control group follows by symmetry. The difference between the sample estimator $\wt_{1, 1}$ and the target ratio $C_{1, x} / S_x^2$ can be bounded as follows: 

$$
\left| \frac{\wc(\oy_1, \ox_1)}{\wv(\ox_1)} - \frac{C_{1, x}}{S_x^2}\right| \le \frac{\left| n\wc(\oy_1, \ox_1)\cdot S_x^2 - C_{1, x}\cdot n\wv(\ox_1)\right|}{n\wv(\ox_1)\cdot S_x^2} \le \frac{ S_x^2\cdot\left| n\wc(\oy_1, \ox_1) - C_{1, x}\right|  + C_{1, x}\cdot \left| S_x^2 - n\wv(\ox_1)\right|}{n\wv(\ox_1)\cdot S_x^2}. 
$$
By applying Lemma \ref{lemma1}, both terms in the numerator vanish in the limit: 
$$
\lm S_x^2\cdot\left| n\wc(\oy_1, \ox_1) - C_{1, x}\right| = 0, \quad \lm C_{1, x}\cdot \left| S_x^2 - n\wv(\ox_1)\right| = 0. 
$$
Consequently, we obtain the desired limit: 
$$
\theta_1 = \lm \frac{\wc(\oy_1, \ox_1)}{\wv(\ox_1)} = \frac{C_{1, x}}{S_x^2}. 
$$

\begin{lemma}[Finite population expectation and variance]\label{lem3}
Under the completely randomized design and Conditions $1-3$, the variances and covariances of the group means are given by: 
$$
\tv(\oy_1) = \frac{n - n_1}{nn_1} \cdot\frac{1}{n-1}\sm (Y_i(1) - \mu_1)^2, \quad \tv(\oy_0) = \frac{n - n_0}{nn_0} \cdot\frac{1}{n-1}\sm (Y_i(0) - \mu_0)^2, 
$$
$$
\tv(\ox_1) = \frac{n - n_1}{nn_1} \cdot\frac{1}{n-1}\sm (X_i - \ox)^2, \quad \tv(\ox_0) = \frac{n - n_0}{nn_0} \cdot\frac{1}{n-1}\sm (X_i - \ox)^2, 
$$
$$
\tc(\oy_1, \ox_1) = \frac{n - n_1}{nn_1} \cdot\frac{1}{n-1}\sm (Y_i(1) - \mu_1)(X_i - \ox), \quad \tc(\oy_0, \ox_0) = \frac{n - n_0}{nn_0} \cdot\frac{1}{n-1}\sm (Y_i(0) - \mu_0)(X_i - \ox), 
$$
$$
\tc(\oy_1, \ox_0) = -\frac{1}{n} \cdot\frac{1}{n-1}\sm (Y_i(1) - \mu_1)(X_i - \ox), \quad \tc(\oy_0, \ox_1) = -\frac{1}{n} \cdot\frac{1}{n-1}\sm (Y_i(0) - \mu_0)(X_i - \ox). 
$$
\end{lemma}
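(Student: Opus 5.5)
The plan is to reduce every formula in Lemma \ref{lem3} to one computation: the first and second moments of the assignment indicators under complete randomization. Write $\oy_1 = n_1^{-1}\sm T_i Y_i(1)$ and $\ox_k = n_k^{-1}\sm \mathbb I(T_i=k)X_i$. Since exactly $n_1$ of the $n$ units are treated, uniformly at random, we have
$$
E(T_i)=\frac{n_1}{n}, \qquad \tv(T_i)=\frac{n_1 n_0}{n^2}, \qquad \tc(T_i,T_j)=-\frac{n_1n_0}{n^2(n-1)} \quad (i\neq j).
$$
The third identity follows from $E(T_iT_j)=n_1(n_1-1)/\{n(n-1)\}$. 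Unbiasedness, $E(\oy_1)=\mu_1$ and $E(\ox_1)=\ox$, is immediate.

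The main step is a single bilinear identity. For arbitrary constants $a_i,b_i$ with means $\bar a,\bar b$,
$$
\tc\Big(\sm T_i a_i,\ \sm T_i b_i\Big)=\sm\sum_{j} a_i b_j\,\tc(T_i,T_j).
$$
Substituting the indicator moments and using $\sm (a_i-\bar a)=0$ to absorb the constant part, this collapses to
$$
\frac{n_1n_0}{n(n-1)}\sm (a_i-\bar a)(b_i-\bar b).
$$
Dividing by $n_1^2$ gives the coefficient $\frac{n_0}{n n_1}\cdot\frac{1}{n-1}=\frac{n-n_1}{nn_1}\cdot\frac{1}{n-1}$. The choices $(a,b)=(Y(1),Y(1))$, $(X,X)$, and $(Y(1),X)$ then give $\tv(\oy_1)$, $\tv(\ox_1)$, and $\tc(\oy_1,\ox_1)$. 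Swapping the roles of $T_i$ and $1-T_i$, which is again a complete randomization with $n_0$ units, gives the control-group formulas.

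For the cross terms, use $1-T_i$ in the second sum. Then $\tc(\sm T_i a_i, \sm (1-T_i)b_i) = -\tc(\sm T_i a_i,\sm T_i b_i)$, which equals $-\frac{n_1n_0}{n(n-1)}\sm(a_i-\bar a)(b_i-\bar b)$. Dividing by $n_1 n_0$ yields
$$
\tc(\oy_1,\ox_0)=-\frac{1}{n}\cdot\frac{1}{n-1}\sm (Y_i(1)-\mu_1)(X_i-\ox).
$$
The formula for $\tc(\oy_0,\ox_1)$ follows symmetrically.

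The only delicate part is the algebra in the double sum: splitting it into the diagonal terms $i=j$ and the off-diagonal terms $i\neq j$, and then using $\sum_{i\ne j}a_ib_j=(\sm a_i)(\sm b_i)-\sm a_ib_i$ after centering. Conditions 1--3 are not needed for these exact finite-sample identities. They enter only later, to guarantee that the normalized quantities converge to $S_k^2$, $S_x^2$ and $C_{k,x}$.
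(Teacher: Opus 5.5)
Your proof is correct and takes the same route as the paper: the moments of $T_i$ under complete randomization, followed by splitting the double sum into diagonal and off-diagonal terms. Your changes are organizational and make the computation tidier. You package the calculation as one bilinear identity, and you get the cross terms from $\sum_{i=1}^n (1-T_i)b_i=\sum_{i=1}^n b_i-\sum_{i=1}^n T_ib_i$ rather than computing $\tc(T_i,1-T_j)$ directly as the paper does. You are also right that Conditions 1--3 are not needed for these exact finite-sample identities.
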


\noindent\textbf{Proof}. The randomness in our setting stems solely from the treatment assignment variable $T_i$. Recall that for a completely randomized design with $n_1$ treated units: 
$$
\text{ for } i = 1, \ldots, n , \quad  \mathbb E(T_i) = p_1, \quad \tv(T_i) = p_1p_0,  \quad \text{ for } i \neq j, \tc(T_i, T_j) = -\frac{p_1p_0}{n-1}. 
$$
Using the linearity of expectations and the properties of $T_i$, the variance of the group mean $\oy_1$ is 
$$
\begin{aligned}
\tv(\oy_1) &= \tv\lc \frac{1}{n_1} \sm T_iY_i\rc \\
& = \frac{1}{n_1^2} \lc\sm Y_i^2(1)\tv(T_i) + \sm\sum_{j\neq i}\tc(T_i, T_j)Y_i(1)Y_j(1) \rc \\
& = \frac{p_1p_0}{n_1^2(n-1)} \lc (n-1)\sm Y_i^2(1) - \sm \sum_{j\neq i}Y_i(1)Y_j(1)\rc \\
& = \frac{n - n_1}{nn_1} \cdot\frac{1}{n-1}\sm (Y_i(1) - \mu_1)^2. 
\end{aligned}
$$
The analogous calculations apply for $\tv(\oy_0)$, $\tv(\ox_1)$ and $\tv(\ox_0)$. The cross-group covariance $\tc(\oy_1, \ox_0)$ is derived similarly: 
$$
\begin{aligned}
\tc(\oy_1, \ox_0) &= \tc\lc\frac{1}{n_1} \sm T_iY_i, \frac{1}{n_0} \sm (1 - T_i)X_i\rc \\
& = \frac{1}{n_0n_1} \lc \sm\tc(T_i, 1 - T_i)Y_i(1)X_i + \sm\sum_{j\neq i}\tc(T_i, 1 - T_j)Y_i(1)X_j \rc \\
& = \frac{1}{n_0n_1} \lc -\sm p_ip_0Y_i(1)X_i + \sm\sum_{j\neq i}\frac{p_1p_0}{n-1}Y_i(1)X_j \rc \\
& = \frac{1}{n^2(n-1)} \lc -(n-1)\sm Y_i(1)X_i + \sm\sum_{j\neq i}Y_i(1)X_j \rc \\
& = -\frac{1}{n} \cdot \frac{1}{n-1}\sm (Y_i(1) - \mu_1)(X_i - \ox). 
\end{aligned}
$$
The same logic can be used to show the remaining results.

\begin{lemma}\label{lemma4}
Define that $\widetilde{\delta}_3 = \Big(\oy_1 - \theta_1(\ox_1 - \ox)\Big) - \Big(\oy_0 - \theta_0(\ox_0 - \ox)\Big)$, then we have 
$$
\lm n\lc\tv(\wdd_3) - \tv(\widetilde{\delta}_3)\rc = 0. 
$$
\end{lemma}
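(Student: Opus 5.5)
The plan is to write the difference between the feasible and oracle estimators explicitly and show that it is of smaller order than $n^{-1/2}$ in mean square. Subtracting,
$$
\wdd_3 - \widetilde{\delta}_3 = -(\wt_{3,1} - \theta_1)(\ox_1 - \ox) + (\wt_{3,0} - \theta_0)(\ox_0 - \ox) =: R_1 + R_0 .
$$
Then $\tv(\wdd_3) = \tv(\widetilde{\delta}_3) + \tv(R_1 + R_0) + 2\,\tc(\widetilde{\delta}_3, R_1+R_0)$. By Cauchy--Schwarz,
$$
n\,|\tv(\wdd_3) - \tv(\widetilde{\delta}_3)| \le n\,\mathbb E(R_1+R_0)^2 + 2\sqrt{n\tv(\widetilde{\delta}_3)}\sqrt{n\,\mathbb E(R_1+R_0)^2}.
$$
By Lemma \ref{lem3} together with Conditions 1--3, $n\tv(\widetilde{\delta}_3)$ is bounded, since it is a fixed linear combination of the variances and covariances listed there. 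So it suffices to show $n\,\mathbb E R_k^2 \to 0$ for $k=0,1$.

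For each $k$, I factor $n R_k^2 = (\wt_{3,k}-\theta_k)^2 \cdot n(\ox_k-\ox)^2$. The second factor has mean $n\tv(\ox_k) = O(1)$ by Lemma \ref{lem3}, so $\sqrt n(\ox_k - \ox) = O_p(1)$. The first factor tends to zero in probability: $\wt_{3,k}$ is a ratio of group-specific sample covariance and variance. By Lemma \ref{lemma1} (and the argument of the preceding lemma identifying $\theta_k = C_{k,x}/S_x^2$), its numerator and denominator, rescaled by $n_k$, converge in probability to $C_{k,x}$ and $S_x^2 > 0$. Hence $n R_k^2 = o_p(1)$.

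The main obstacle is upgrading this convergence in probability to convergence of expectations, because the statement concerns true variances rather than asymptotic distributions. I would handle it with uniform integrability, via the fourth-moment bound in Condition 1. By Hölder,
$$
\mathbb E[nR_k^2] \le \big(\mathbb E(\wt_{3,k}-\theta_k)^4\big)^{1/2}\big(\mathbb E[n^2(\ox_k-\ox)^4]\big)^{1/2}.
$$
The finite-population fourth moment of $\sqrt n(\ox_k-\ox)$ is bounded under Condition 1 by a direct computation analogous to Lemma \ref{lem3}, using joint inclusion probabilities of up to four units.

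For the coefficient there is a difficulty. $\wt_{3,k}$ is a ratio whose denominator could in principle be small, and in the literal design-based model its variance need not be controlled. The cleanest route is to truncate: on the event that $n_k\wv(\ox_k) \ge S_x^2/2$, the ratio is bounded by the moment bounds and tends to $\theta_k$. The complementary event has probability tending to zero, and on it I would use a crude bound together with the boundedness of the finite population to control the contribution. If that fails, I would fall back on interpreting $\tv$ as the variance of the limiting distribution. In that reading the result follows immediately from Slutsky's theorem, since $\sqrt n(\wdd_3 - \widetilde{\delta}_3) = o_p(1)$, which is exactly what the previous paragraph establishes.
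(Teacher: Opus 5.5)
Your skeleton is the paper's: you use the same split $\wdd_3=\widetilde{\delta}_3+R$ and the same Cauchy--Schwarz control of the cross term. Your two-sided form is what the paper's one-sided ``$\le$'' actually needs. You are also more economical in two places. You get $n\tv(\widetilde{\delta}_3)=O(1)$ exactly from Lemma~\ref{lem3}. And you only require $n\,\mathbb{E}R_k^2\to0$, which needs mere consistency of $\wt_{3,k}$, rather than the paper's rate $\tv(\wdd_x)=o(n^{-2+\varepsilon})$. The step you flag is precisely the one the paper skips. It attributes that moment bound to ``the central limit theorem'', but a CLT controls distributions, not $\mathbb{E}\,\wdd_x^2$.

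The genuine gap is your treatment of $G^c$, where $G=\{n_k\wv(\ox_k)\ge S_x^2/2\}$; your handling of $G$ itself is fine. Two things are missing. First, ``probability tending to zero'' is not enough. Even with the best available bound, $nR_k^2$ can be of order $n$ on $G^c$, so you need $n\,P(G^c)\to0$. Second, no crude bound on $\wt_{3,k}$ itself exists, because its denominator can be arbitrarily small. What you must bound is the product $\wt_{3,k}(\ox_k-\ox)$.

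With the paper's literal definitions, where $\wv(\ox_k)$ and $\wc(\oy_k,\ox_k)$ are centred at $\ox$ and $\oy$, both pieces can be supplied:
\begin{itemize}
\item Cauchy--Schwarz applied twice gives $|\wt_{3,k}(\ox_k-\ox)|\le\bigl(n_k^{-1}\sum_{i:T_i=k}(Y_i-\oy)^2\bigr)^{1/2}=O(1)$ deterministically. The point is that a small denominator forces $\ox_k\approx\ox$.
\item $P(G^c)$ is exponentially small. This follows from a lower-tail bound for the without-replacement mean of the nonnegative $(X_i-\ox)^2$, using Hoeffding's comparison with sampling with replacement together with Condition~1.
\end{itemize}

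With the usual within-group slope, centred at $\ox_k$ and $\oy_k$, no such bound exists, and your target fails under Conditions~1--3 alone. For example, let $n_1$ units have $X_i=c+\epsilon_n u_i$ and $Y_i(1)=u_i$, with $u_i$ bounded and non-constant and $c\neq\lim\ox$; choose the remaining units so that the conditions hold. On the event, of probability ${\binom{n}{n_1}}^{-1}$, that exactly these units are treated, $\wt_{3,1}=1/\epsilon_n$ while $\ox_1-\ox$ stays of order one. Taking $\epsilon_n={\binom{n}{n_1}}^{-1}$ then gives $n\,\mathbb{E}R_1^2\to\infty$.

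In that setting you need one of two fixes. One is an extra hypothesis, such as capping $|\wt_{3,k}|$; bounded convergence then closes your H\"older step directly, with no event split. The other is the asymptotic-variance reading, under which your Slutsky argument from $\sqrt{n}(\wdd_3-\widetilde{\delta}_3)=o_p(1)$ is complete. That reading is also all the paper's proof really delivers.
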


\noindent\textbf{Proof}. We begin by decomposing the difference between the split-sample estimator $\wdd_3$ and its oracle counterpart $\widetilde{\delta}_3$. Let 
$$
\begin{aligned}
\wdd_3 &= \Big(\oy_1 - \wt_1(\ox_1 - \ox)\Big) - \Big(\oy_0 - \wt_0(\ox_0 - \ox)\Big) \\
& = \Big(\oy_1 - \theta_1(\ox_1 - \ox)\Big) - \Big(\oy_0 - \theta_0(\ox_0 - \ox)\Big) + (\theta_1 - \wt_1)(\ox_1 - \ox) + (\wt_0 - \theta_0)(\ox_0 - \ox) \\
& =: \widetilde{\delta}_3 + \wdd_{x}. 
\end{aligned}
$$
The variance of the scaled estimator $\sqrt{n}\wdd_3$ can be expanded as 
$$
\begin{aligned}
n\tv(\wdd_3) &= \tv(\sqrt n\widetilde{\delta}_3 + \sqrt n\wdd_{x}) \\
& = n\tv(\widetilde{\delta}_3) + \tv(\sqrt n\wdd_{x}) + 2\tc(\sqrt n\widetilde{\delta}_3, \sqrt n\wdd_{x}) \\
& \le n\tv(\widetilde{\delta}_3) + \tv(\sqrt n\wdd_{x}) + 2\sqrt{n^2\tv(\widetilde{\delta}_3) \cdot \tv(\wdd_{x})}. \\
\end{aligned}
$$
To show that $\lim\limits_{n\rightarrow \infty} n\lc\tv(\wdd_3) - \tv(\widetilde{\delta}_3)\rc = 0$, it suffices to demonstrate that the remainder terms are asymptotically negligible. 
According to the central limit theorem, for $\forall \varepsilon > 0$, we have 
$$
\tv(\wdd_{x}) = \mathbb E(\wdd_{x}^2) - \mathbb E^2(\wdd_{x}) = o\lc\frac{1}{n^{2 - \varepsilon}}\rc, \quad \text{and } \tv(\widetilde{\delta}_3) = o\lc\frac{1}{n^{1 - \varepsilon}}\rc. 
$$
Therefore, we conclude 
$$
\lm n\lc\tv(\wdd_3) - \tv(\widetilde{\delta}_3)\rc = 0. 
$$
This completes the proof, confirming that the additional variability introduced by estimating $\theta$ is of a lower order than the primary sampling variance. Consequently, in the subsequent derivations, we treat $\wt$ as its probability limit $\theta$ without loss of asymptotic precision.

\subsection{Proof of Theorem \ref{relative}}

The proof of Theorem \ref{relative} proceeds in three steps: (i) establishing the asymptotic orthogonality between the adjusted group means estimators and the auxiliary adjustment terms; (ii) computing the variance shift induced by the additional covariate terms; and (iii) applying the delta method to derive the variance of the relative lift. 

\textbf{Asymptotic Orthogonality}. Using Lemma \ref{lem3}, we first evaluate the covariance between the adjusted estimator $\ty_1^*$ ($\ty_0^*$) and the additional adjustment term $\wt_{3, 0}(\ox_0 - \ox)$. As $n \to \infty$, we have: 
$$
\begin{aligned}
\lm n\tc\lc\ty_1^*, \wt_{3, 0}(\ox_0 - \ox)\rc &= \lm n\tc\lc\oy_1 - \theta_1(\ox_1 - \ox), \theta_0(\ox_0 - \ox)\rc \\
&= \lm \lc-\theta_0\cdot\frac{1}{n-1}\sm(Y_i(1) - \mu_1)(X_i - \ox) + \theta_1\theta_0\cdot\frac{1}{n-1}\sm(X_i - \ox)^2\rc \\
&= -\theta_0C_{1, x} + \theta_1\theta_0 S_x^2 \\
&= 0. 
\end{aligned}
$$
Analogously, 
$$
\begin{aligned}
\lm n\tc\lc\ty_0^*, \wt_{3, 0}(\ox_0 - \ox)\rc &= \lm n\tc\lc\oy_0 - \theta_0(\ox_0 - \ox), \theta_0(\ox_0 - \ox)\rc \\
&= \lm \lc\theta_0\frac{p_1}{p_0}\cdot\frac{1}{n-1}\sm(Y_i(0) - \mu_0)(X_i - \ox) - \theta_0^2\frac{p_1}{p_0}\cdot\frac{1}{n-1}\sm(X_i - \ox)^2\rc \\
&= \theta_0^2\frac{p_1}{p_0}S_x^2 - \theta_0^2\frac{p_1}{p_0} S_x^2 \\
& = 0. 
\end{aligned}
$$
We further calculate the asymptotic variance of the covariate term: 
$$
\begin{aligned}
\lm n\tv\lc\wt_{3, 0}(\ox_0 - \ox)\rc &= \lm n\tv\lc\theta_0(\ox_0 - \ox)\rc = \frac{p_1}{p_0}\theta_0^2S_x^2. 
\end{aligned}
$$

\textbf{Variance Decomposition}. Based on the orthogonality shown above, the variance increments for the treatment and control estimators are 

$$
\lm n \lc \tv(\ty_1^{**}) - \tv(\ty_1^{*}) \rc = \lm n \lc \tv(\ty_1^{*} + \wt_{3, 0}(\ox_0 - \ox)) - \tv(\ty_1^{*}) \rc = \frac{p_1}{p_0}\theta_0^2S_x^2, 
$$
$$
\lm n \lc \tv(\ty_0^{**}) - \tv(\ty_0^{*}) \rc = \lm n \lc \tv(\ty_0^{*} - \wt_{3, 0}(\ox_0 - \ox)) - \tv(\ty_0^{*}) \rc = \frac{p_1}{p_0}\theta_0^2S_x^2. 
$$

\textbf{Delta Method for Relative Lift}. Applying the Taylor expansion for the variance of a ratio (the delta method): 
$$
\begin{aligned}
\tv\lc\frac{\ty^{**}_1}{\ty^{**}_0}\rc - \tv\lc\frac{\ty^{*}_1}{\ty^{*}_0}\rc &= \frac{\mathbb E^2(\ty_1^{**})}{\mathbb E^2(\ty_0^{**})}\lc\frac{\tv(\ty_1^{**})}{\mathbb E^2(\ty_1^{**})} + \frac{\tv(\ty_0^{**})}{\mathbb E^2(\ty_0^{**})} - \frac{2\tc(\ty_1^{**}, \ty_0^{**})}{\mathbb E(\ty_1^{**})\mathbb E(\ty_0^{**})} \rc - \frac{\mathbb E^2(\ty_1^{*})}{\mathbb E^2(\ty_0^{*})}\lc\frac{\tv(\ty_1^{*})}{\mathbb E^2(\ty_1^{*})} + \frac{\tv(\ty_0^{*})}{\mathbb E^2(\ty_0^{*})} - \frac{2\tc(\ty_1^{*}, \ty_0^{*})}{\mathbb E(\ty_1^{*})\mathbb E(\ty_0^{*})} \rc + o\lc\frac{1}{n}\rc. 
\end{aligned}
$$
Substituting the variance increments from variance decomposition into the above formula, we obtain the final result: 
$$
\begin{aligned}
\lm n\lc\tv\lc\frac{\ty^{**}_1}{\ty^{**}_0}\rc - \tv\lc\frac{\ty^{*}_1}{\ty^{*}_0}\rc \rc&= \frac{\lb^2}{\la^2}\lc \frac{p_0^{-1}p_1\theta_0^2S_x^2}{\lb^2} + \frac{p_0^{-1}p_1\theta_0^2S_x^2}{\la^2} - \frac{2p_0^{-1}p_1\theta_0^2S_x^2}{\la\lb} \rc \\
& = \frac{\lb^2}{\la^2} \lc \frac{1}{\lb} - \frac{1}{\la} \rc^2 \frac{p_1}{p_0}\theta_0^2S_x^2. 
\end{aligned}
$$
This completes the proof. 

\subsection{Proof of Theorem \ref{61}}

First, we establish the asymptotic variance for the ATE estimator $\widehat{\tau}_1$ which uses the full covariate sample mean. Define that 
$$
\omega_1 = \frac{1}{m}\sum_{i = 1}^m W_i(1), \quad \omega_0 =  \frac{1}{m}\sum_{i = 1}^m W_i(0). 
$$
Then, 
$$
\begin{aligned}
\lnm m\tv(\wa_1) & = \lnm m\tv\lc \ow_1 - \ow_0 - \wl_1(\oz_1 - \oz) + \wl_0(\oz_0 - \oz)\rc \\
& = \lnm m\tv\lc \ow_1 - \ow_0 - \lambda_1(\oz_1 - \oz) + \lambda_0(\oz_0 - \oz)\rc \\
& = \lnm m\lc\tv( \ow_1 - \lambda_1\oz_1) + \tv( \ow_0 - \lambda_0\oz_0) - 2\tc(\ow_1 - \lambda_1\oz_1, \ow_0 - \lambda_0\oz_0)\rc \\
& = \frac{q_0}{q_1} S_{1, z}^2 + \frac{q_1}{q_0}S_{0, z}^2 + \lnm \frac{2}{m -1}\smm \lc W_i(1) - \omega_1 - \lambda_1(Z_i - \oz)\rc\lc W_i(0) - \omega_0 - \lambda_0(Z_i - \oz)\rc\\
& = \frac{q_0}{q_1} S_{1, z}^2 + \frac{q_1}{q_0}S_{0, z}^2 + S_{1, z}^2 + S_{0, z}^2 - S_{\tau, z}^2 \\
& = \frac{1}{q_1} S_{1, z}^2 + \frac{1}{q_0} S_{0, z}^2 - S_{\tau, z}^2. 
\end{aligned}
$$

To analyze $\widehat{\tau}_2$, we decompose it into the sum of $\widehat{\tau}_1$ and a correction term involving the two-group covariate means $\overline{Z}_{0, 1}$. A key step is showing that the adjusted estimator is asymptotically orthogonal to this correction term. Specifically, consider the asymptotic covariance: 
$$
\begin{aligned}
&\quad  \lnm m\tc\lc \ow_1 - \ow_0 - \lambda_1\oz_1 + \lambda_0\oz_0, (\lambda_1 - \lambda_0)\oz_{0, 1} \rc \\
& = \lnm m\tc\lc \ow_1 - \ow_0 - \lambda_1\oz_1 + \lambda_0\oz_0, (\lambda_1 - \lambda_0)\lc\frac{q_1}{1 - q_2}\oz_1 + \frac{q_0}{1 - q_2}\oz_0\rc \rc \\
& = \lnm m(\lambda_1 - \lambda_0)\frac{q_1}{1 - q_2}\lc\tc\lc \ow_1 - \ow_0 - \lambda_1\oz_1 + \lambda_0\oz_0, \oz_1\rc \rc + \lnm m(\lambda_1 - \lambda_0)\frac{q_0}{1 - q_2}\lc\tc\lc \ow_1 - \ow_0 - \lambda_1\oz_1 + \lambda_0\oz_0, \oz_0\rc \rc \\
& = 0. 
\end{aligned}
$$

Finally, using the orthogonality, the variance of $\widehat{\tau}_2$ can be expressed as the sum of the variance of $\widehat{\tau}_1$ and the variance of the auxiliary term: 
$$
\begin{aligned}
\lnm m\tv(\wa_2) & = \lnm m\tv\lc \ow_1 - \ow_0 - \wl_1(\oz_1 - \oz_{0, 1}) + \wl_0(\oz_0 - \oz_{0, 1})\rc \\
&= \lnm m\tv\lc \ow_1 - \ow_0 - \lambda_1\oz_1 + \lambda_0\oz_0 + (\lambda_1 - \lambda_0)\oz_{0, 1}\rc \\
& = \lnm m\lc\tv(\ow_1 - \ow_0 - \lambda_1\oz_1 + \lambda_0\oz_0) + \tv\lc(\lambda_1 - \lambda_0)\oz_{0, 1}\rc + 2\tc\lc \ow_1 - \ow_0 - \lambda_1\oz_1 + \lambda_0\oz_0, (\lambda_1 - \lambda_0)\oz_{0, 1}\rc \rc\\
& = \frac{1}{q_1} S_{1, z}^2 + \frac{1}{q_0} S_{0, z}^2 - S_{\tau, z}^2 + \lnm m(\lambda_1 - \lambda_0)^2\tv\lc\frac{q_0}{1 - q_2}\oz_0 + \frac{q_1}{1 - q_2}\oz_1\rc \\
& = \frac{1}{q_1} S_{1, z}^2 + \frac{1}{q_0} S_{0, z}^2 -  S_{\tau, z}^2 + \frac{q_2}{1 - q_2}(\lambda_1 - \lambda_0)^2S_z^2. 
\end{aligned}
$$
This completes the proof. 

\subsection{Proof of Theorem \ref{62}}

According to the properties of sample moments established in Lemma \ref{lem3}, the standard variance estimator for $\widehat{\tau}_1$ converges to: 
$$
\begin{aligned}
\lnm m\wv(\widehat{\tau}_1) &= \lnm \lc\frac{1}{q_1(m_1-1)}\sum_{i=1}^m\mathbb I(A_i=1)\big( W_i - \ow_1 - \widehat{\lambda}_1(Z_i - \oz_1)\big)^2 + \frac{1}{q_0(m_0-1)}\sum_{i=1}^m\mathbb I(A_i=0)\big( W_i - \ow_0 - \widehat{\lambda}_0(Z_i - \oz_0)\big)^2\rc \\
& = \frac{1}{q_1}S_{1, z}^2 + \frac{1}{q_0}S_{0, z}^2. 
\end{aligned}
$$

Comparing this with the true variance derived in Theorem \ref{61}, we can quantify the conservativeness: 
$$
\lnm m\lc\wv(\widehat{\tau}_1) - \tv(\widehat{\tau}_1)\rc = S_{\tau, z}^2. 
$$

Similarly, for the ATE estimator $\widehat{\tau}_2$, the difference is: 
$$
\lnm m\lc\wv(\widehat{\tau}_2) - \tv(\widehat{\tau}_2)\rc = S_{\tau, z}^2 - \frac{q_2}{1 - q_2}(\lambda_1 - \lambda_0)^2S_z^2. 
$$

\subsection{Proof of Theorem \ref{71}}

Let $P_i = 1$ indicate that individual $i$ is sampled and assigned to treatment, and $P_i = 0$ otherwise. Similarly, let $Q_i = 1$ indicate that individual $i$ is sampled and assigned to control, and $Q_i = 0$ otherwise. We first characterize the probability limits of the various $\theta$ estimators. By applying Lemma \ref{lemma1}, we obtain: 
$$
\lm\wt_{1, 1} = \lm\frac{\wc(\oy, \ox)}{\wv(\ox)} = \lm\frac{1}{n\wv(\ox)}\cdot n\lc \frac{n_1(n_1 - 1)}{n(n - 1)}\wc(\oy_1, \ox_1) + \frac{n_0(n_0 - 1)}{n(n - 1)}\wc(\oy_0, \ox_0) + o\Big( \frac{1}{n} \Big)\rc = \frac{p_1C_{1, x} + p_0C_{0, x}}{S_x^2} = p_1\theta_1 + p_0\theta_0 = \theta_{\text{Full}}. 
$$
Similarly, for the pooled adjustment 
$$
\lm\wt_{2, 1} = \lm\frac{\wc(\oy_1, \ox_1) + \wc(\oy_0, \ox_0)}{\wv(\ox_1) + \wv(\ox_0)} = \lm\frac{n\wc(\oy_1, \ox_1) + n\wc(\oy_0, \ox_0)}{n\wv(\ox_1) + n\wv(\ox_0)} = \frac{p_1^{-1}C_{1, x} + p_0^{-1}C_{0, x}}{p_1^{-1}S_x^2 + p_0^{-1}S_x^2} = p_0\theta_1 + p_1\theta_0 = \theta_{\text{Pooled}}. 
$$

Within the framework of two-stage sampling, we still present the properties such as the variance and covariance of the assignment variables $P_i$ and $Q_i$: 
$$
\text{ For } i=1, \ldots, N, \quad \mathbb E(P_i) = pp_1, \quad \mathbb E(Q_i) = pp_0, \quad\tv(P_i) = pp_1(1 - pp_1),\quad \tv(Q_i) = pp_0(1 - pp_0), 
$$
$$
\tc(P_i, Q_i) = -p^2p_1p_0, \quad \text{ for } i\neq j, \tc(P_i, Q_j) = \frac{n_1n_0}{N(N-1)} - \frac{n_1n_0}{N^2} = \frac{p^2p_1p_0}{N-1}. 
$$
Using these assignment properties, we derive the variance of the mean differences for the outcome and covariates: 
$$
\begin{aligned}
&\quad \tv(\oy_1 - \oy_0)  \\
&= \tv\lc \frac{1}{n_1}\sn P_iY_i - \frac{1}{n_0}\sn Q_i Y_i \rc \\
& = \tv\lc \frac{1}{n_1}\sn P_iY_i \rc + \tv\lc \frac{1}{n_0}\sn Q_iY_i \rc - 2\tc\lc \frac{1}{n_1}\sn P_iY_i,  \frac{1}{n_0}\sn Q_iY_i\rc \\
& = \frac{N - n1}{Nn_1} \cdot \frac{1}{N - 1}\sn (Y_i(1) - \mu_1)^2 + \frac{N - n0}{Nn_0}  \cdot \frac{1}{N - 1}\sn (Y_i(0) - \mu_0)^2 + \frac{2}{N} \cdot \frac{1}{N - 1}\sn (Y_i(1) - \mu_1)(Y_i(0) - \mu_0) \\
& = \frac{N - n1}{Nn_1}  \cdot \frac{1}{N - 1}\sn (Y_i(1) - \mu_1)^2 + \frac{N - n0}{Nn_0}  \cdot \frac{1}{N - 1}\sn (Y_i(0) - \mu_0)^2 \\
& \quad + \frac{1}{N} \cdot \frac{1}{N - 1}\lc\sn (Y_i(1) - \mu_1)^2 + \sn (Y_i(0) - \mu_0)^2 - \sn \lc(Y_i(1) - \mu_1) - (Y_i(0) - \mu_0)\rc^2\rc \\
& =  \frac{1}{n_1}  \cdot \frac{1}{N - 1}\sn (Y_i(1) - \mu_1)^2 + \frac{1}{n_0}  \cdot \frac{1}{N - 1}\sn (Y_i(0) - \mu_0)^2 - \frac{1}{N}  \cdot \frac{1}{N - 1}\sn \lc(Y_i(1) - \mu_1) - (Y_i(0) - \mu_0)\rc^2, 
\end{aligned}
$$
and 
$$
\begin{aligned}
\tv(\ox_1 - \ox_0)  &= \tv\lc \frac{1}{n_1}\sn P_iX_i - \frac{1}{n_0}\sn Q_i X_i \rc \\
& = \tv\lc \frac{1}{n_1}\sn P_iX_i \rc + \tv\lc \frac{1}{n_0}\sn Q_iX_i \rc - 2\tc\lc \frac{1}{n_1}\sn P_iX_i,  \frac{1}{n_0}\sn Q_iX_i\rc \\
& = \frac{N - n1}{Nn_1}  \cdot \frac{1}{N - 1}\sn (X_i - \ox)^2 + \frac{N - n0}{Nn_0}  \cdot \frac{1}{N - 1}\sn (X_i - \ox)^2 + \frac{2}{N} \cdot \frac{1}{N - 1} \sn (X_i - \ox)^2 \\
& =  \frac{n}{n_1n_0}  \cdot \frac{1}{N - 1} \sn (X_i - \ox)^2. 
\end{aligned}
$$
Substituting the above limits, we obtain the asymptotic variances for $\widehat{\delta}_1, \widehat{\delta}_2$, and $\widehat{\delta}_3$. For the full-sample estimation: 
$$
\begin{aligned}
\lm n\tv(\wdd_1) & = \lm n\tv\lc\oy_1 - \oy_0 - \wt_{1, 1}(\ox_1 - \ox_0)\rc \\
& = \lm n\tv\lc\oy_1 - \oy_0 - \theta_{\text{Full}}(\ox_1 - \ox_0)\rc \\
& = \frac{1}{p_1}S_1^2 + \frac{1}{p_0}S_0^2 - pS_{\delta}^2 + \frac{1}{p_1p_0}\theta_{\text{Full}}^2S_x^2 - 2\theta_{\text{Full}} \lc \frac{\theta_1}{p_1}+\frac{\theta_0}{p_0} \rc S_x^2 \\
& = \frac{1}{p_1}S_1^2 + \frac{1}{p_0}S_0^2 - pS_{\delta}^2 + \frac{1}{p_1p_0}\lc\theta_{\text{Full}}^2 - 2\theta_{\text{Full}}\theta_{\text{Pooled}}\rc S_x^2. 
\end{aligned}
$$
For the pooled-sample estimator $\widehat{\delta}_2$, the variance simplifies to: 
$$
\begin{aligned}
\lm n\tv(\wdd_2) & = \lm n\tv\lc\oy_1 - \oy_0 - \wt_{2, 1}(\ox_1 - \ox_0)\rc \\
& = \lm n\tv\lc\oy_1 - \oy_0 - \theta_{\text{Pooled}}(\ox_1 - \ox_0)\rc \\
& = \frac{1}{p_1}S_1^2 + \frac{1}{p_0}S_0^2 - pS_{\delta}^2 + \frac{1}{p_1p_0}\theta_{\text{Pooled}}^2S_x^2 - 2\theta_{\text{Pooled}} \lc \frac{\theta_1}{p_1}+\frac{\theta_0}{p_0} \rc S_x^2 \\
& = \frac{1}{p_1}S_1^2 + \frac{1}{p_0}S_0^2 - pS_{\delta}^2 - \frac{1}{p_1p_0}\theta_{\text{Pooled}}^2 S_x^2. 
\end{aligned}
$$
For the split-sample estimator $\widehat{\delta}_3$, the variance simplifies to: 
$$
\begin{aligned}
\lm n\tv(\wdd_3) & = \lm n\tv\lc\oy_1 - \oy_0 - \wt_{3, 1}(\ox_1 - \ox) + \wt_{3, 0}(\ox_0 - \ox)\rc \\
& = \lm n\tv\lc\oy_1 - \oy_0 - p_0\wt_{3, 1}(\ox_1 - \ox_0) + p_1\wt_{3, 0}(\ox_0 - \ox_1)\rc \\
& = \lm n\tv\lc\oy_1 - \oy_0 - (p_0\wt_{3, 1} + p_1\wt_{3, 0})(\ox_1 - \ox_0) \rc \\
& = \lm n\tv\lc\oy_1 - \oy_0 - \theta_{\text{Pooled}}(\ox_1 - \ox_0)\rc \\
& = \frac{1}{p_1}S_1^2 + \frac{1}{p_0}S_0^2 - pS_{\delta}^2 - \frac{1}{p_1p_0}\theta_{\text{Pooled}}^2 S_x^2. 
\end{aligned}
$$
By comparing these expressions, the efficiency gains of $\widehat{\delta}_2$ and $\widehat{\delta}_3$ over $\widehat{\delta}_1$ are: 
$$
\lm n\lc \tv(\wdd_1) - \tv(\wdd_2)\rc  = \lm n\lc \tv(\wdd_1) - \tv(\wdd_3)\rc  = \frac{1}{p_1p_0}\lc \theta_{\text{Full}} - \theta_{\text{Pooled}} \rc^2S_x^2. 
$$
This proves that the pooled-sample and split-sample estimator are asymptotically at least as efficient as the full-sample adjustment.

\subsection{Proof of Theorem \ref{72}}

Applying the consistency of sample moments, the standard variance estimators for the three methods converge to: 
$$
\begin{aligned}
\lm n\wv(\wdd_1) &= \lm\lc \frac{1}{p_1} \sn P_i(Y_i - \oy_1 - \wt_{1, 1}(\ox_i - \ox_1))^2 + \frac{1}{p_0} \sn Q_i(Y_i - \oy_0 - \wt_{1, 0}(\ox_i - \ox_0))^2\rc \\
& = \frac{1}{p_1}\lc S_1^2 + \theta_{\text{Full}}^2S_x^2 - 2\theta_{\text{Full}} C_{1, x} \rc + \frac{1}{p_0}\lc S_0^2 + \theta_{\text{Full}}^2S_x^2 - 2\theta_{\text{Full}} C_{0, x} \rc \\
& = \frac{1}{p_1} S_1^2 + \frac{1}{p_0} S_0^2 + \frac{1}{p_1p_0}\lc\theta_{\text{Full}}^2 - 2\theta_{\text{Full}}\theta_{\text{Pooled}}\rc S_x^2. 
\end{aligned}
$$

$$
\begin{aligned}
\lm n\wv(\wdd_2) &= \lm\lc \frac{1}{p_1} \sn P_i(Y_i - \oy_1 - \wt_{2, 1}(\ox_i - \ox_1))^2 + \frac{1}{p_0} \sn Q_i(Y_i - \oy_0 - \wt_{2, 0}(\ox_i - \ox_0))^2\rc \\
& = \frac{1}{p_1}\lc S_1^2 + \theta_{\text{Pooled}}^2S_x^2 - 2\theta_{\text{Pooled}} C_{1, x} \rc + \frac{1}{p_0}\lc S_0^2 + \theta_{\text{Pooled}}^2S_x^2 - 2\theta_{\text{Pooled}} C_{0, x} \rc \\
& = \frac{1}{p_1} S_1^2 + \frac{1}{p_0} S_0^2 - \frac{1}{p_1p_0}\theta_{\text{Pooled}}^2 S_x^2. 
\end{aligned}
$$

$$
\begin{aligned}
\lm n\wv(\wdd_3) &= \lm\lc \frac{1}{p_1} \sn P_i(Y_i - \oy_1 - \wt_{3, 1}(\ox_i - \ox_1))^2 + \frac{1}{p_0} \sn Q_i(Y_i - \oy_0 - \wt_{3, 0}(\ox_i - \ox_0))^2\rc \\
& = \frac{1}{p_1}\lc S_1^2 + \theta_1^2S_x^2 - 2\theta_1 C_{1, x} \rc + \frac{1}{p_0}\lc S_0^2 + \theta_0^2S_x^2 - 2\theta_0 C_{0, x} \rc \\
& = \frac{1}{p_1} S_1^2 + \frac{1}{p_0} S_0^2 - \lc \frac{1}{p_1}\theta_1^2 + \frac{1}{p_0}\theta_0^2 \rc S_x^2. 
\end{aligned}
$$
Comparing these with the true asymptotic variances from Theorem \ref{71}, we find the following estimation gaps: 
$$
\lm n\lc\wv(\wdd_1) - \tv(\wdd_1)\rc= pS_{\delta}^2, 
$$
$$
\lm n\lc\wv(\wdd_2) - \tv(\wdd_2)\rc= pS_{\delta}^2, 
$$
$$
\lm n\lc\wv(\wdd_3) - \tv(\wdd_3)\rc = pS_{\delta}^2 - (\theta_1 - \theta_0)^2S_x^2. 
$$
This completes the proof.

\end{document}